\definecolor{green2}{RGB}{34,139,34}
\definecolor{green}{RGB}{34,139,34}
	\newcommand{\ignore}[1]{}
\newcommand{\eps}{\varepsilon}
\newcommand{\R}{\mathbb{R}}
\DeclareMathOperator{\polylog}{polylog}
\DeclarePairedDelimiter{\ceil}{\lceil}{\rceil}
\newcommand{\e}{\epsilon}
\newtheorem{theorem}{Theorem}
\newtheorem{lemma}{Lemma}
\newtheorem{observation}{Observation}
\newtheorem{definition}{Definition}
\newtheorem{claim}{Claim}
\newtheorem{fact}{Fact}
\newtheorem{example}{Example}
\newcommand{\poly}{\text{poly}}
\newcommand{\wt}{\widetilde}
\newcommand{\T}{\top}
\newcommand{\ape}{\approx_\e}
\newenvironment{proofof}[1]{\noindent{\bf Proof of #1:}}{$\qed$\par}
 \gdef\xxxmark{%
   \expandafter\ifx\csname @mpargs\endcsname\relax 
     \expandafter\ifx\csname @captype\endcsname\relax 
       \marginpar{xxx}
     \else
       xxx 
     \fi
   \else
     xxx 
   \fi}
 \gdef\xxx{\@ifnextchar[\xxx@lab\xxx@nolab}
 \long\gdef\xxx@lab[#1]#2{{\bf [\xxxmark #2 ---{\sc #1}]}}
 \long\gdef\xxx@nolab#1{{\bf [\xxxmark #1]}}
\newcommand{\plog}{\mathop\mathrm{polylog}}
\newcommand{\norm}[1]{\|#1\|}
\title{Dynamic Streaming Spectral Sparsification\\ 
in Nearly Linear Time and Space}
\author{Michael Kapralov\\EPFL \and  Navid Nouri \\EPFL \and Aaron Sidford  \\Stanford University  \and Jakab Tardos \\EPFL}
  \newcommand{\cSTOC}[1]{\nth{\intcalcSub{#1}{1968}}\ Annual\ ACM\ Symposium\ on\ Theory\ of\ Computing\ (STOC)}
  \newcommand{\cICALP}[1]{\nth{\intcalcSub{#1}{1973}}\ International\ Colloquium\ on\ Automata,\ Languages and\ Programming\ (ICALP)}
  \newcommand{\cPODS}[1]{\nth{\intcalcSub{#1}{1981}}\ Symposium\ on\ Principles\ of\ Database\ Systems\ (PODS)}
  \newcommand{\pICALP}[1]{Preliminary\ version\ in\ the\ \cICALP{#1}, #1}
  \newcommand{\STOC}[1]{Proceedings\ of\ the\ \cSTOC{#1}}
  \newcommand{\PODS}[1]{Proceedings\ of\ the\ \cPODS{#1}}
  \newcommand{\arXiv}[1]{\href{http://arxiv.org/abs/#1}{arXiv:#1}}
\begin{document}

\maketitle

\begin{abstract}
In this paper we consider the problem of computing {\em spectral approximations} to graphs in the single pass dynamic streaming model. We provide a linear sketching based solution that given a stream of edge insertions and deletions to a $n$-node undirected graph, uses $\tilde O(n)$ space, processes each update in $\tilde O(1)$ time, and with high probability recovers a spectral sparsifier in $\tilde O(n)$ time. Prior to our work, state of the art results either used near optimal $\tilde  O(n)$ space complexity, but brute-force $\Omega(n^2)$ recovery time [Kapralov et al.'14], or with subquadratic runtime, but polynomially suboptimal space complexity [Ahn et al.'14, Kapralov et al.'19].

Our main technical contribution is a novel method for `bucketing' vertices of the input graph into clusters that allows fast recovery of edges of sufficiently large effective resistance. Our algorithm first buckets vertices of the graph by performing ball-carving using (an approximation to) its effective resistance metric, and then recovers the high effective resistance edges from a sketched version of an electrical flow between vertices in a bucket, taking nearly linear time in the number of vertices overall. This process is performed at different geometric scales to recover a sample of edges with probabilities proportional to effective resistances and obtain an actual sparsifier of the input graph. 

This work provides both the first efficient $\ell_2$-sparse recovery algorithm for graphs and new primitives for manipulating the effective resistance embedding of a graph, both of which we hope have further applications.
\end{abstract}


\newpage

\section{Introduction}
Graph sketching, i.e. constructing small space summaries for graphs using linear measurements, has received much attention since the work of Ahn, Guha and McGregor~\cite{ahn2012analyzing} gave a linear sketching primitive for graph connectivity with optimal $O(n\log^3 n)$ space complexity~\cite{NelsonY19}.  
A key application of linear sketching has been to design small space algorithms for processing {\em dynamic graph streams}, where edges can be both inserted and deleted, although the graph sketching paradigm has been shown very powerful in many other areas such as distributed algorithms and dynamic algorithms (we refer the reader to the survey~\cite{McGregor17} for more on applications of graph sketching).   Furthermore, it is known that linear sketching  is essentially a universal approach to designing dynamic streaming algorithms~\cite{LiYiWoodruff14}, and yields distributed protocols for graph processing with low communication. Sketching solutions have been recently constructed for many graph problems, including spanning forest computation~\cite{ahn2012analyzing}, cut and spectral sparsifiers~\cite{AhnGM12,streamingSpectral}, spanner construction~\cite{AhnGM12,KapralovW14}, matching and matching size approximation~\cite{AssadiKLY16,AssadiKL17}, sketching the Laplacian~\cite{AndoniCKQWZ16,JambulapatiS18} and many other problems.  The focus of our work is on {\em oblivious} sketches for approximating spectral structure of graphs with optimally fast recovery. A sketch is called {\em oblivious} if its distribution is independent of the input -- such sketches yield efficient {\em single pass} dynamic streaming algorithms for sparsification. We now outline the main ideas involved in previous works on this and related problems, and highlight the main challenges in designing a solution that achieves both linear space and time. 
 
 Oblivious linear sketches with nearly optimal $n\log^{O(1)} n$ have been obtained for the related problems of constructing a spanning forest of the input graph~\cite{ahn2012analyzing}, the problem of constructing {\em cut sparsifiers} of graphs~\cite{ahn2012graph} and for the spectral sparsification problem itself~\cite{streamingSpectral}. In the former two cases the core of the problem is to design a sketch that allows recovery of edges that cross {\em small cuts} in the input graph, and the problem is resolved by applying $\ell_0$-sampling(see, e.g.,~\cite{JowhariST11,CormodeF14,KapralovNPWWY17}), and more generally exact (i.e., $\ell_0$) sparse recovery techniques on the edge incidence matrix $B\in \R^{{n \choose 2}\times n}$ of the input graph: one designs a sketching matrix $S\in \R^{\log^{O(1)} n\times {n \choose 2}}$ and maintains $S\cdot B\in \R^{\log^{O(1)} n \times n}$ throughout the stream. A natural recovery primitive that follows Boruvka's algorithm for the MST problem then yields a nearly linear time recovery scheme.  Specifically, to recover a spanning tree one repeatedly samples outgoing edges out of every vertex of the graph and contracts resulting connected components into supernodes, halving the number of connected components in every round. Surprisingly, a sketch of the original graph suffices for sampling edges that go across connected components in graphs that arise through the contraction process, yielding a spanning forest in $O(\log n)$ rounds and using $n\log^{O(1)} n$ bits of space.
 
 The situation with spectral sparsifiers is very different: edges critical to obtaining a spectral approximation do not necessarily cross small cuts in the graph. Instead, `important edges' are those that have large effective resistance, i.e can be made `heavy' in the $\ell_2$ sense in an appropriate linear combination of the columns of the edge incidence matrix $B$. This observation was used in~\cite{streamingSpectral} to design a sketch with nearly optimal $n\log^{O(1)} n$ space complexity, but the recovery of the sparsifier was brute-force and ran in $\Omega(n^2)$ time: one had to iterate over all potential edges and test whether they are in the graph and have `high' effective resistance. Approaches based on relating effective resistances to inverse connectivity have been proposed~\cite{ahn2013spectral}, but these result in suboptimal $\Omega(n^{5/3})$ space complexity. In a very recent work~\cite{KMMMN19} a subset of the authors proposed an algorithm with $n^{1.4+o(1)}$ space and runtime complexity, but no approach that yields optimal space and runtime was known previously. 
 
A key reason why previously known sketching techniques for reconstructing spectral approximations to graphs failed to achieve nearly linear runtime is exactly the lack of simple `local' (akin to Boruvka's algorithm) technique for recovering heavy edges. The main contribution of this paper is such a technique: we propose a bucketing technique based on ball carving in (an approximation to) the effective resistance metric that  recovers appropriately heavy effective resistance edges by routing flows between source-sink pairs that belong to the same bucket. This ensures that the recovery process is more `localized', and results in a nearly linear time algorithm.

\paragraph{Our result.} Formally, we consider the problem of constructing {\em spectral sparsifiers}~\cite{SpielmanTengSparsification,spielman2011graph} of graphs presented as a dynamic stream of edges:  given a graph $G=(V, E)$ presented as a dynamic stream of edge insertions and deletions and a precision parameter $\e \in (0, 1)$, our algorithm outputs a graph $G'$ such that
$$
(1-\e) L\preceq L'\preceq (1+\e)L,
$$
where $L$ is the Laplacian of $G$, $L'$ is the Laplacian of $G'$ and $\prec$ stands for the positive semidefinite ordering of matrices.

Our main result is a linear sketching algorithm that compresses a graph with $n$ vertices to a $n \log^{O(1)} n$-bit representation that allows $\log^{O(1)} n$-time updates, and from which a spectral approximation can be recovered in $n\log^{O(1)} n$ time. Thus, our result achieve both optimal space and time complexity simultaneously. 

\begin{theorem}[Near Optimal Streaming Spectral Sparsification] \label{thm:main}
There exists an algorithm such that for any $\epsilon\in (0, 1)$, processes a list of edge insertions and deletions for an unweighted graph $G$ in a single pass and maintains a set of linear sketches of this input in $O(\e^{-2}n\log^{O(1)} n)$ space. From these sketches, it recovers in $O(\e^{-2}n\log^{O(1)} n)$ time, with high probability, a weighted subgraph $H$ with $O(\epsilon^{-2}n\log n)$ edges, such that $H$ is a $(1 \pm \epsilon)$-spectral sparsifier of $G$.
\end{theorem}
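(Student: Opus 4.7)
The plan is to follow the effective-resistance importance sampling paradigm of Spielman--Srivastava, but implement the sampling procedure within the dynamic stream using a multi-scale sketching framework. At a high level, I would maintain, during the stream, $O(\log n)$ independent linear sketches of the edge-incidence matrix $B$, each applied to a different geometric subsampling of the edges at rates $2^{-i}$. After the stream, scale $i$ is responsible for recovering (approximately) the edges whose effective resistance lies in a geometric band around some threshold $\tau_i$, so that overall each edge is returned with probability proportional to its effective resistance times $\log n/\e^2$. Aggregating these samples and reweighting by the inverse sampling probability then yields a $(1\pm\e)$-spectral sparsifier with $O(\e^{-2}n\log n)$ edges by the standard matrix Chernoff bound for effective-resistance sampling, provided each band returns the correct random sample.

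The crux is showing that at each scale we can perform the per-band recovery in $\tilde O(n)$ time. For this I would use the bucketing primitive suggested by the overview: given the target threshold, construct a partition of $V$ by ball-carving in an approximate effective-resistance metric, with the carving radius tuned so that (i) any edge whose effective resistance exceeds the threshold has both endpoints placed in the same bucket with constant probability, and (ii) each bucket has small effective-resistance diameter. Within a bucket $C$ with chosen source--sink $(s,t)$, the key observation is that if $e=(u,v)\in C\times C$ has $R_{\mathrm{eff}}(u,v)$ comparable to $R_{\mathrm{eff}}(s,t)$, then the normalized current on $e$ in the $s\to t$ electrical flow is $\ell_2$-heavy relative to the total flow energy (which is bounded by the bucket's diameter). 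Therefore, if we maintain a linear sketch $SB$ for an $\ell_2$-sparse recovery matrix $S$, and apply it to the flow vector $SBL^{+}(\chi_s-\chi_t)$ for each source--sink pair within each bucket, we can extract the heavy edges in $\polylog(n)$ time per bucket. Since the total number of source--sink queries across all buckets at a given scale is $O(n)$, the work per scale is $\tilde O(n)$, and summing over $O(\log n)$ scales remains $\tilde O(n)$.

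Two auxiliary subproblems need attention. First, one must implement the ball-carving without knowing effective resistances exactly; this I would handle by maintaining in the sketch a Johnson--Lindenstrauss compression of the effective-resistance embedding (the rows of $BL^{+}$), which yields constant-factor estimates of pairwise $R_{\mathrm{eff}}$ in $\tilde O(1)$ time per query, and then running a standard exponential-radius low-diameter decomposition on this metric. Second, the operator $L^{+}$ used in forming the sketched flow vector refers to the true graph Laplacian, which we do not have explicitly; but a crude $O(1)$-approximate sparsifier $\widetilde L$ (obtainable either from a prior, coarser pass of the same algorithm or by a bootstrapping stage analogous to~\cite{streamingSpectral}) suffices, because linearity of the sketch lets us replace $L^{+}$ by $\widetilde L^{+}$ without breaking $\ell_2$-heaviness up to constant factors.

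The main obstacle is calibrating the interaction between the ball-carving radius and the subsampling rate at each scale: the buckets must be small enough in the effective-resistance metric that the total energy of the flow between the chosen source--sink pair is controlled (so that target edges are genuinely $\ell_2$-heavy in the flow vector), yet large enough that edges at the scale's threshold are not separated by the partition. Establishing this simultaneously requires exploiting the fact that the effective-resistance metric behaves like an $\ell_2^2$ metric, so that Bartal/Miller--Peng--Xu-style ball-carving gives the needed simultaneous packing and separation bounds. Once this trade-off is set, combining the per-edge sampling guarantee with the matrix Chernoff bound delivers the spectral approximation, while the time and space bounds follow by summing $\polylog(n)$ overhead over $O(\log n)$ scales and $O(n)$ queries.
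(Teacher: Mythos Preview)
Your high-level architecture matches the paper's: multi-scale edge subsampling, ball-carving in the effective-resistance metric, $\ell_2$ heavy-hitter recovery from sketched electrical flows, JL for the metric, and bootstrapping via a coarse sparsifier. The gap is in the central recovery claim.

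You assert that within a bucket $C$ with ``chosen source--sink $(s,t)$'', any edge $e=(u,v)$ with $u,v\in C$ and $R_{uv}$ comparable to $R_{st}$ is $\ell_2$-heavy in the $s\to t$ electrical flow. This is false for arbitrary $s,t\in C$: on an $n$-cycle, four consecutive vertices $\{1,2,3,4\}$ all have pairwise effective resistance $\Theta(1)$, yet the $1\to 2$ flow puts only $1/n$ current on edge $(3,4)$, so $(3,4)$ contributes a $1/n^2$ fraction of the energy. Bounding the bucket diameter controls the \emph{denominator} $\|\mathbf f_{st}\|_2^2=R_{st}$, but does nothing to force flow through $e$. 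The missing ingredient is Lemma~\ref{lem:halfeff} (the identity $\mathbf b_{uv}=\mathbf b_{ux}+\mathbf b_{xv}$): for any anchor $x$, at least one of $|\mathbf b_{uv}^\top\wt K^+\mathbf b_{xu}|$, $|\mathbf b_{uv}^\top\wt K^+\mathbf b_{xv}|$ is at least $\tfrac12 R^{\wt K}_{uv}$. Hence the flow must be routed \emph{to an endpoint of the target edge}; in each bucket one fixes an arbitrary $x$ and queries the heavy-hitter sketch on $B\wt K^+\mathbf b_{xv}$ for \emph{every} $v$ in the bucket, a star of queries. The total is still $\sum_C(|C|-1)\le n$, so your $\tilde O(n)$ runtime survives, but only with this query pattern and this justification.

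Two smaller points your sketch elides. First, after subsampling edges at rate $2^{-s}$ the energy $\|\mathbf f^{(s)}_{xv}\|_2^2$ is random, and a concentration step (Hoeffding, after normalizing so each entry lies in $[-1,1]$ via Fact~\ref{fact:maxphi2}) is needed to keep the target edge heavy with high probability. Second, the paper's partitioning is a randomly shifted $\ell_\infty$ grid in the $q=O(\log n)$-dimensional JL image rather than an exponential-clock LDD, and the bootstrap for $\wt K$ is the specific regularization chain $L+\lambda_u 2^{-\ell}I$ of Lemma~\ref{lem:chain_coarse}, recursed on within the single pass rather than obtained from a separate coarser pass.
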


Our result in Theorem~\ref{thm:main} can be thought of as the first efficient `$\ell_2$-graph sketching' result, using an analogy to compressed sensing recovery guarantees.  It is interesting to note that compressed sensing primitives that allow recovery in time nearly linear in sketch size (which is exactly what our algorithm achieves for the sparsification problem) usually operate by hashing the input vector into buckets so as to isolate dominant entries, which can then be recovered efficiently. The main contribution of our work is giving a `bucketing scheme' for graphs that allows for nearly linear time recovery. As we show, the right `bucketing scheme' for the spectral sparsification problem is a space partitioning scheme in the effective resistance metric.

\paragraph{Effective resistance, spectral sparsification, and random spanning trees.} 
The \emph{effective resistance metric} or \emph{effective resistance distances} induced by an undirected graph plays a central role in spectral graph theory and has been at the heart of numerous algorithmic breakthroughs over the past decade. They are central to the to obtaining fast algorithms for constructing spectral sparsifiers \cite{spielman2011graph, KoutisLP16}, spectral vertex sparsifiers \cite{KyngLPSS16}, sparsifiers of the random walk Laplacian \cite{ChengCLPT15, JindalKPS17}, and subspace sparsifiers \cite{LiS18}. They have played a key role in many advances in solving Laplacian systems \cite{SpielmanT04,KoutisMP10,KoutisMP11,PengS14,CohenKMPPRX14,KoutisLP16,KyngLPSS16,KyngS16} and are critical to the current fastest (weakly)-polynomial time algorithms for maximum flow and minimum cost flow in certain parameter regimes \cite{LeeS14}. Given their utility, the computation of effective resistances has itself become an area of active research \cite{JambulapatiS18,ChuGPSSW18}.

In a line of work particularly relevant to this paper, the effective resistance metric has played an important role in obtaining faster algorithms for generating random spanning trees~\cite{KelnerM09,MadryST15,Schild18}. The result of~\cite{MadryST15} partitions the graph into clusters with bounded diameter in the effective resistance metric in order to speed up simulation of a random walk, whereas~\cite{Schild18} proposed a more advanced version of this approach to achieve a nearly linear time simulation. While these results seem superficially related to ours, there does not seem to be any way of using spanning tree generation techniques for our purpose. The main reason is that the objective in spanning tree generation results is quite different from ours: there one would like to find a partition of the graph that in a sense minimizes the number times a random walk  crosses cluster boundaries, which does not correspond to a way of recovering `heavy' effective resistance edges in the graph. In particular, while in spanning tree generation algorithms the important parameter is the number of edges crossing the cuts generated by the partitioning, whereas it is easily seen that heavy effective resistance edges cannot be recovered from small cuts. Finally, the problem of partitioning graphs into low effective resistance diameter clusters has been studied recently in \cite{alev2017graph}. The focus of the latter work is on partitioning into {\em induced} expanders, and the results of~\cite{alev2017graph} were an important tool in the work of~\cite{KMMMN19} that achieved the previous best $n^{1.4+o(1)}$ space and runtime complexity for our problem.  Our techniques in this paper take a different route and achieve optimal results.

\paragraph{Prior work.} Streaming algorithms are well-studied with too many results to list and we refer the reader to \cite{McGregor14,McGregor17} for a survey of streaming algorithms. The idea of linear graph sketching was introduced in a seminal paper of Ahn, Guha, and McGregror \cite{ahn2012analyzing}, where a $O(\log n)$-pass sparsification algorithm for dynamic streams was presented (this result is for the weaker notion of cut sparsification due to~\cite{Karger94, BenczurKarger:1996}). A single-pass algorithm for cut sparsification with nearly optimal $\widetilde O(\e^{-2} n)$ space was given in~\cite{ahn2012graph}, and extensions of the sketching approach of~\cite{ahn2012analyzing} to hypergraphs were presented in~\cite{GuhaMT15}. The more challenging problem of computing a spectral sparsifier from a linear sketch was addressed in \cite{ahn2013spectral}, who gives an $\tilde{O}(\e^{-2} n^{5/3})$ space solution. An $\tilde{O}(\e^{-2} n)$ space solution was obtained in \cite{kapralov2017single} by more explicitly exploiting the connection between graph sketching and vector sparse recovery, at the expense of $\wt{O}(\e^{-2} n^2)$ runtime.  In a recent work~\cite{KMMMN19} a subset of the authors gave a single pass algorithm with $\e^{-2} n^{1.4+o(1)}$ space and runtime complexity. 

We also mention that spectral sparsifiers have been studied in the insertion-only streaming model, where edges can only be added to $G$ \cite{KelnerLevin:2013,cohen2016online,kyng2017framework}, and in a dynamic data structure model \cite{dynamicSparsifiers,AndoniCKQWZ16,jambulapati2018efficient}, where more space is allowed, but the algorithm must quickly output a sparsifier at every step of the stream. While these models are superficially similar to the dynamic streaming model, they seem to allow for different techniques, and in particular do not require linear sketching since they do not constrain the space used by the algorithm. The spectral sparsification problem on its own has received a lot of attention in the literature (e.g.,~\cite{spielman2011graph,SpielmanT11,BatsonSS09,ZhuLO15,LeeS15a,Lee017}. We refer the reader to the survey~\cite{BatsonSST13} for a more complete set of references.

\section{Preliminaries}
\label{sec:prelim}

\noindent{\textbf{General Notation.}} 
Let $G=(V,E)$ be an unweighted undirected graph with $n$ vertices and $m$ edges. For any vertex $v\in V$, let $\chi_v \in \R^n$ be the indicator vector of $v$, with a one at position $v$ and zeros elsewhere. Let $B_n\in \R^{\binom{n}{2}\times n}$ denote the vertex edge incidence matrix of an unweighted and undirected complete graph, where for any edge $e=(u,v)\in {V\choose 2}, u\ne v$, its $e$'th row is equal to $\mathbf{b}_e:= \mathbf{b}_{uv} := \chi_u-\chi_v$.  
Let $B\in \R^{\binom{n}{2}\times n}$ denote the vertex edge incidence matrix of $G=(V,E)$. $B$ is obtained by zeroing out any rows of $B_n$ corresponding to $(u,v)\notin E$.\footnote{Note this is different then the possibly more standard definition of $B$ as the $E \times V$ matrix with the rows not in the graph removed altogether.} 

For weighted graph $G=(V,E,w)$, where $w:E\rightarrow \R_+$ denotes the edge weights, let $W\in {\R_{+}}^{\binom{n}{2} \times \binom{n}{2} }$ be the diagonal matrix of weights where $W(e, e) = w(e)$ for $e \in E$ and $W(e,e) = 0$ otherwise. Note that $L=B^\T W B = B_n^T W B_n$, is the Laplacian matrix of $G$. Let $L^{+}$ denote the
Moore-Penrose pseudoinverse of $L$. Also, for a real valued variable $s$, we define $s^+:=\max\{0,s\}$. We also use the following folklore:
\begin{fact}\label{fact:lambda}
	For any Laplacian matrix $L$ of an unweighted and undirected graph, its minimum nonzero eigenvalue is bounded from below by $\lambda_\ell=\frac{1}{8n^2}$ and its maximum eigenvalue is bounded from above by $\lambda_u=2n$.
\end{fact}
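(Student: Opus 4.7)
The upper bound on $\lambda_u$ is immediate from a standard Rayleigh quotient calculation. For any $x \in \R^n$, using $L = B_n^\T W B_n$ with $W$ the unweighted edge indicator,
\[
x^\T L x \;=\; \sum_{(u,v)\in E}(x_u - x_v)^2 \;\leq\; 2\sum_{(u,v)\in E}(x_u^2 + x_v^2) \;=\; 2\sum_v \deg(v)\, x_v^2 \;\leq\; 2(n-1)\|x\|_2^2,
\]
since in an unweighted simple graph every degree is at most $n-1$. This gives $\lambda_{\max}(L) \leq 2(n-1) \leq 2n$.

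For the lower bound, the first step is a reduction to the connected case: $L$ is block-diagonal with respect to the connected components of $G$, so the set of nonzero eigenvalues of $L$ is the union of the nonzero spectra of the component Laplacians, each on at most $n$ vertices. Thus it suffices to show that for any connected unweighted graph $G$ on at most $n$ vertices, $\lambda_2(L_G) \geq \tfrac{1}{8n^2}$. I would prove this by a standard Poincar\'e-style argument along shortest paths. Given $x \in \R^n$ with $\sum_v x_v = 0$, one has the identity $\sum_{u,v \in V}(x_u - x_v)^2 = 2n\|x\|_2^2$. For each ordered pair $(u,v)$, fix a shortest path $P_{uv}$ in $G$; since $G$ is connected this path has at most $n-1$ edges, and Cauchy--Schwarz gives $(x_u - x_v)^2 \leq (n-1)\sum_{(a,b)\in P_{uv}}(x_a-x_b)^2$. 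Summing over all $n^2$ ordered pairs and using that each edge of $G$ appears in at most $n^2$ such paths,
\[
2n\|x\|_2^2 \;=\; \sum_{u,v}(x_u-x_v)^2 \;\leq\; (n-1)\cdot n^2 \cdot \sum_{(a,b)\in E}(x_a-x_b)^2 \;=\; n^2(n-1)\, x^\T L x.
\]
Minimizing the Rayleigh quotient over $x \perp \mathbf{1}$ yields $\lambda_2(L_G) \geq \tfrac{2}{n(n-1)} \geq \tfrac{1}{8n^2}$.

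The only conceptual step is the reduction to the connected case; the Poincar\'e inequality is a short exercise. I do not anticipate a genuine obstacle, and in fact the argument proves the strictly stronger bound $\Omega(1/n^2)$, showing that the constant $\tfrac{1}{8}$ in the statement is comfortably loose and the proof is intended merely as a convenient black-box quantitative bound for use elsewhere in the paper.
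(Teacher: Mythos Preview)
Your argument is correct: the Rayleigh-quotient bound for $\lambda_{\max}$ and the Poincar\'e-type path argument for $\lambda_2$ both go through as written, and in fact yield the sharper bound $\lambda_2 \ge \tfrac{2}{n(n-1)}$. Note, however, that the paper does not prove this statement at all---it is introduced as ``folklore'' and left without proof---so there is no approach in the paper to compare yours against. Your proof is a valid and self-contained justification of the stated bounds.
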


\begin{definition}\label{def:Ggamma}
	For any unweighted graph $G=(V,E)$ and any $\gamma\ge0$, we define $L_{G^\gamma}$, as follows:
	$$L_{G^\gamma}=L_G+\gamma I.$$
	This can be seen in the following way. One can think of $G^{\gamma}$ as graph $G$ plus some regularization term. In order to distinguish between edges of $G$ and regularization term in $G^\gamma$, we let $B_{G^\gamma}=B \oplus \sqrt{\gamma}I$, where $B \oplus \sqrt{\gamma}I$ is the operation of appending rows of $\sqrt{\gamma}I $ to matrix $B$. One should note that $B_{G^\gamma}^\top B_{G^\gamma}=L_{G^\gamma}$. Also for simplicity we define $L_\ell$ for any integer $\ell \in [0,d+1]$ as follows:
	\begin{align*}
	L_{\ell}=\begin{cases}
	L_G+\frac{\lambda_u}{2^\ell}I & \text{if $0\le\ell\le d$}\\
	L_G & \text{if $\ell=d+1.$}
	\end{cases}
	\end{align*}
	where $d$ and $\lambda_u$ are defined as in Lemma~\ref{lem:chain_coarse}.
\end{definition}
We often denote the matrix $L_{G^\gamma}=L_G+\gamma I$ by $K$, and in particular use the notation $L$ and $K$ interchangeably.

\noindent{\textbf{Effective Resistance.}} 
Given a weighted graph $G =(V, E, w)$ we associate it with an electric circuit where the vertices are junctions and each edge $e$ is a resistor of resistance $1/w(e)$. Now suppose in this circuit we inject one unit current at vertex $u$, extract one from vertex $v$, and let $\mathbf{f}_{uv}\in \R^{m}$ denote the the currents induced on the edges. By Kirchhoff's current law, except for the source $u$ and the sink $v$, the sum of the currents entering and exiting any vertex is zero. Hence, we have $\mathbf{b}_{uv}=B^\T\mathbf{f}_{uv}$. Let $\mathbf{\varphi}\in\R^{n}$ denote the voltage potentials induced at the vertices in the above setting. By Ohm's law we have $\mathbf{f}=WB\varphi$. Putting these facts together: $$\chi_u-\chi_v=B^\T WB\varphi=L\varphi\text{.}$$
Observe that $(\chi_u-\chi_v) \perp \text{ker}(L)$, and hence $\varphi=L^{+}(\chi_u-\chi_v )$. 

The \textit{effective resistance} between vertices $u$ and $v$ in graph $G$, denoted by $R_{uv}$ is defined as the voltage difference between vertices $u$ and $v$, when a unit of current  is injected into $u$ and is extracted from $v$. Thus we have:
\begin{equation}\label{eq:Eff-Res}
R_{uv}=\mathbf{b}_{uv}^\T L^{+}\mathbf{b}_{uv}.
\end{equation}
We also let $R_{uu}:=0$ for any $u\in V$, for convenience. For any matrix $K\in \R^{n\times n}$, we let $R^K_{uv}:=\mathbf{b}_{uv}^\T K^{+}\mathbf{b}_{uv}$.

Also, for any pair of vertices $(w_1,w_2)$, the potential difference induced on this pair when sending a unit of flow from $u$ to $v$ can be calculated as: 
\begin{equation}
\varphi(w_1)-\varphi(w_2)=\mathbf{b}_{w_1w_2}^\top L^+\mathbf{b}_{uv}.
\end{equation}
Furthermore, if the graph is unweighted, the flow on edge $(w_1,w_2)$ is 
\begin{equation}\label{eq:flow}
\mathbf{f}_{uv}(w_1w_2)=\mathbf{b}_{w_1w_2}^\top L^+\mathbf{b}_{uv}.
\end{equation}

We frequently use the following simple fact.

\begin{fact}[See e.g. \cite{kapralov2017single}, Lemma 3]\label{fact:maxphi2}
	For any graph $G=(V,E,w)$, $\gamma\ge 0$ and any Laplacian matrix $L\in \R^{V}$, let $K= L+\gamma I$. Then, for any pair of vertices $(u,v), (u',v')\in V \times V$, 
	\begin{align*}
	|\mathbf{b}_{u'v'}^\top K^+\mathbf{b}_{uv}|\le \mathbf{b}_{uv}^\top K^+\mathbf{b}_{uv}.
	\end{align*}
\end{fact}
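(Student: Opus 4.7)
The plan is to reduce the inequality to the discrete maximum principle for electric potentials. Set $\varphi := K^+ \mathbf{b}_{uv} \in \R^{n}$, so that
$$
\mathbf{b}_{uv}^\top K^+ \mathbf{b}_{uv} = \varphi(u) - \varphi(v), \qquad \mathbf{b}_{u'v'}^\top K^+ \mathbf{b}_{uv} = \varphi(u') - \varphi(v').
$$
The claim then reduces to the pointwise bound $\varphi(v) \le \varphi(w) \le \varphi(u)$ for every $w \in V$, since this immediately gives $|\varphi(u') - \varphi(v')| \le \varphi(u) - \varphi(v)$ for arbitrary $u', v' \in V$.

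For the main case $\gamma > 0$, I would reinterpret $\varphi$ as vertex potentials in an augmented graph $\tilde G$ on vertex set $V \cup \{0\}$, where $0$ is a new ``ground'' vertex joined to every $v \in V$ by an edge of weight $\gamma$. A direct block computation of the Laplacian $L^{\mathrm{aug}}$ of $\tilde G$ shows that $\tilde\varphi := (\varphi, 0) \in \R^{V \cup \{0\}}$ satisfies $L^{\mathrm{aug}} \tilde\varphi = (\mathbf{b}_{uv}, 0)$: the row at $0$ requires $\gamma(n \cdot 0 - \mathbf{1}^\top \varphi) = 0$, which holds because $L \mathbf{1} = 0$ gives $K \mathbf{1} = \gamma \mathbf{1}$, hence $K^+ \mathbf{1} = \gamma^{-1} \mathbf{1}$, and therefore $\mathbf{1}^\top \varphi = \gamma^{-1} \mathbf{1}^\top \mathbf{b}_{uv} = 0$. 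Thus $\tilde\varphi$ is, up to an additive constant, the electric potential in $\tilde G$ when a unit of current is injected at $u$ and extracted at $v$.

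Now $\tilde\varphi$ is harmonic at every vertex of $\tilde G$ other than $u$ and $v$ (the ground receives no external current, so the row equation there is homogeneous), and $\tilde G$ is connected. The discrete maximum principle then forces both the maximum and the minimum of $\tilde\varphi$ to be attained in $\{u, v\}$. The ordering is pinned down by $\tilde\varphi(u) - \tilde\varphi(v) = \mathbf{b}_{uv}^\top (L^{\mathrm{aug}})^+ \mathbf{b}_{uv} \ge 0$ (since $(L^{\mathrm{aug}})^+$ is PSD), so $\varphi(v) \le \varphi(w) \le \varphi(u)$ for every $w \in V$, completing the argument. The boundary case $\gamma = 0$ reduces to the classical maximum principle for harmonic functions applied within each connected component of $G$.

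The main thing to verify carefully is the ground-vertex reformulation, specifically that $(\varphi, 0)$ really is a valid potential vector in $\tilde G$ rather than merely one up to an unknown shift; the identity $K^+ \mathbf{1} = \gamma^{-1}\mathbf{1}$ is the key linear-algebraic input that ensures the additive constant works out to $0$ at the ground. Once this reduction is in place, the remainder of the proof is a direct invocation of the maximum principle in $\tilde G$.
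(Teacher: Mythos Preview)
Your proof is correct and takes essentially the same approach as the paper: both set $\varphi = K^+\mathbf{b}_{uv}$ and establish the pointwise bound $\varphi(v)\le\varphi(w)\le\varphi(u)$ for all $w\in V$ via a discrete maximum-principle argument. The only difference is cosmetic---the paper argues directly from the diagonal dominance of $K=L+\gamma I$ (if some $x\neq u$ maximized $\varphi$, row $x$ of $K\varphi=\mathbf{b}_{uv}$ would force $\mathbf{b}_{uv}(x)>0$, a contradiction), whereas you first augment with a ground vertex so that the $\gamma I$ term is absorbed into an ordinary graph Laplacian and then invoke the classical maximum principle for graph-harmonic functions.
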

\begin{proof}
	Let $\varphi = K^+\mathbf{b}_{uv}$. Suppose that for some $x\in V\setminus \{u\}$, $\varphi(x)>\varphi(u)$. Then, since $K=L+\gamma I$ is a full rank and diagonally dominant matrix, then one can easily see that we should have $b_{uv}(x)>0$, which is a contradiction. So, $\varphi(u)\ge \varphi(x)$ for any $x \in V\setminus \{u\}$ . In a similar way, we can argue that $\varphi(v)\le \varphi(y)$ for any $y \in V\setminus \{v\}$.  So, the claim holds. 
\end{proof}

\noindent{\textbf{Spectral Approximation}.}
For matrices $C,D \in \R^{p \times p}$, we write $C\preceq D$, if $\forall x \in \R^p$, $x^\T C x \leq x^\T D x$. We say that $\wt{C}$ is $(1\pm\e)$-spectral sparsifier of $C$, and we write it as $\wt{C}\approx_\e C$, if $(1-\e)C \preceq \wt{C} \preceq (1+\e)C$.  Graph $\wt{G}$ is $(1\pm\e)$--spectral sparsifier of graph $G$ if, $L_{\wt{G}}\approx_\e L_G$. We also sometimes use a slightly weaker notation $(1-\e)C \preceq_r \wt{C} \preceq_r (1+\e)C$, to indicate that $(1-\e)x^\top C x\le x^\top \wt{C} x \le (1+\e)x^\top C x$, for any $x$ in the row span of $C$.

\newcommand{\Unif}{\mathrm{Unif}}

\section{Main result}\label{sec:HEanalysis}
We start by giving some intuition and presenting the high level idea of our algorithm in Section~\ref{sec:overview} below. In Section~\ref{sec:main-alg} we formally  state the algorithm and provide correctness analysis. In Section~\ref{sec:sketches} we describe how the required sketches can be implemented using the efficient pseudorandom number generator from~\cite{KMMMN19}. Finally in Section~\ref{sec:proof} we give the proof of Theorem~\ref{thm:main}.

\subsection{Overview of the approach}\label{sec:overview}

To illustrate our approach, suppose for now that our goal is to find edges with effective resistance at least $\frac{1}{\log n}$ in a graph $G=(V,E)$, which we denote by "heavy edges". This task has been studied in prior work on spectral sparsification \cite{kapralov2017single} and was essentially shown in \cite{KMMMN19} to be sufficient to yield a spectral sparsification with only almost constant overhead. Each of \cite{kapralov2017single} and \cite{KMMMN19} solve this problem by running $\ell_2$-heavy hitters on approximate flow vectors, obtained by coarse sparsifier of the graph. The number of test flow vectors used in \cite{kapralov2017single} is quadratic in the number of vertices, i.e., they brute force on all pair of vertices to find the heavy edges, and this was improved to $n^{1.4 + o(1)}$ in \cite{KMMMN19}. Consequently, a natural question that one could attack to further improve the running times of these methods is the following:

\begin{center}
	\fbox{
		\parbox{0.9\textwidth}{ \begin{center}
	Can we efficiently find a nearly linear number of test vectors that enable us to recover all heavy edges?
\end{center}}}
\end{center}

In this work, we answer this question in the affirmative and formally show that there exist a linear number of test vectors, which suffice to find all heavy edges. This is essentially the key technical contribution of this paper and generalizing this solution yields our main algorithmic results.

To illustrate our approach, suppose that one can compute the flow vector using the following formula \footnote{Note that in the actual algorithm we use $K^+$ as opposed to $L^+$, since we work with regularized versions of the Laplacian of $G$, denoted by $K$. We use $L$ in this overview of our techniques to simplify notation.}
\begin{align}
BL^+\mathbf{b}_{uv}=\mathbf{f}_{uv}
\end{align}
for any pair of vertices in polylogarithmic time (in our actual algorithms we will be unable to compute these flow vectors exactly). Note that
\begin{align}
||\mathbf{f}_{uv}||_2^2=\mathbf{b}_{uv}^\top L^+B^\top B L^+\mathbf{b}_{uv}=\mathbf{b}_{uv}^\top L^+\mathbf{b}_{uv}=R_{uv}
\end{align} 
and 
\begin{align}
\mathbf{f}_{uv}(uv)=\mathbf{b}_{uv}^\top L^+ \mathbf{b}_{uv}=R_{uv}.
\end{align}
This implies that, when $R_{uv}>\frac{1}{\log n}$, the contribution of $uv$ coordinate of this vector to the $\ell_2$ norm is substantial, and known $\ell_2$-heavy hitters can recover this edge using corresponding sketches, efficiently. One should note that $\ell_2$-heavy hitter returns a set of edges with $\Omega(\frac{1}{\polylog n})$ contribution to the $\ell_2^2$ of the flow vector. A natural question that arises is whether it is possible to recover a heavy edge without using its flow vector, but rather using other flow vectors. Consider the following example. 
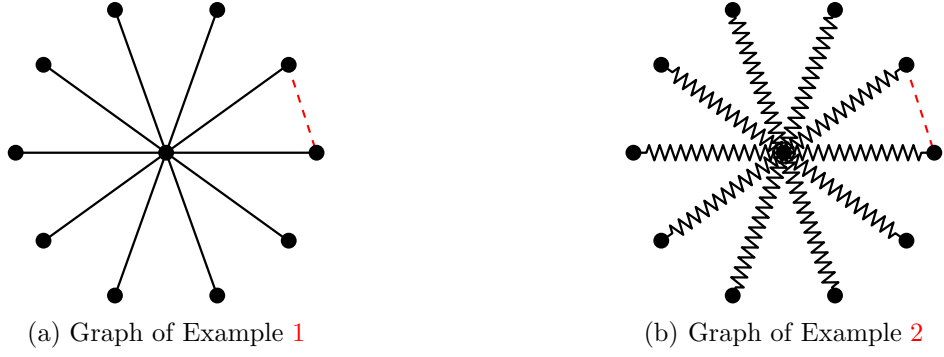
\begin{figure}
	\centering 
	\begin{subfigure}[b]{0.3\textwidth}
		\centering
	\begin{tikzpicture}[line width=0.3mm]

	\draw (0,0)--(+1.62,-1.17);
	\draw (0,0)--(-1.62,-1.17);
	\draw (0,0)--(-1.62,+1.17);

	\draw (0,0)--(+0.68,-1.90);
	\draw (0,0)--(+1.62,1.17);
	\draw (0,0)--(+0.68,1.90);
	\draw (0,0)--(-0.68,1.90);
	\draw (0,0)--(-0.68,-1.90);
	\draw(0,0)--(2,0);
	\draw(0,0)--(-2,0);	

	\draw[red,thick,dashed] (+1.62,+1.17) -- (2,0);
	\fill [color=black] (0,0) circle (3pt);	
	\fill [color=black] (2,0) circle (3pt);
	\fill [color=black] (-2,0) circle (3pt);	
	\fill [color=black] (1.63,-1.17) circle (3pt);
	\fill [color=black] (1.63,+1.17) circle (3pt);
	\fill [color=black] (+0.68,-1.90) circle (3pt);
	\fill [color=black] (+0.68,1.90) circle (3pt);
	\fill [color=black] (-0.68,-1.90) circle (3pt);
	\fill [color=black] (-0.68,1.90) circle (3pt);
	\fill [color=black] (-1.63,-1.17) circle (3pt);
\fill [color=black] (-1.63,+1.17) circle (3pt);
	

	\end{tikzpicture}
	\caption{Graph of Example~\ref{ex:star}}\label{subfig:star1}
	\end{subfigure}
	\hspace{3cm}
		\begin{subfigure}[b]{0.3\textwidth}
		\centering
	\begin{tikzpicture}[line width=0.3mm]

	\draw[red,thick,dashed] (+1.62,+1.17) -- (2,0);
	\fill [color=black] (0,0) circle (3pt);	
	\fill [color=black] (2,0) circle (3pt);
	\fill [color=black] (-2,0) circle (3pt);	
	\fill [color=black] (1.63,-1.17) circle (3pt);
	\fill [color=black] (1.63,+1.17) circle (3pt);
	\fill [color=black] (-1.63,-1.17) circle (3pt);
	\fill [color=black] (-1.63,+1.17) circle (3pt);

	\fill [color=black] (+0.68,-1.90) circle (3pt);
	\fill [color=black] (+0.68,1.90) circle (3pt);
	\fill [color=black] (-0.68,-1.90) circle (3pt);
	\fill [color=black] (-0.68,1.90) circle (3pt);
	\draw [-,
	line join=round,
	decorate, decoration={
		zigzag,
		segment length=4,
		amplitude=2.9,post=lineto,
		post length=3pt
	}] (0,0)--(+1.62,-1.17);
	\draw [-,
	line join=round,
	decorate, decoration={
		zigzag,
		segment length=4,
		amplitude=2.9,post=lineto,
		post length=3pt
	}] (0,0)--(+0.68,-1.90);
	
	\draw [-,
	line join=round,
	decorate, decoration={
		zigzag,
		segment length=4,
		amplitude=2.9,post=lineto,
		post length=3pt
	}] (0,0)--(+1.62,1.17);

	\draw [-,
line join=round,
decorate, decoration={
	zigzag,
	segment length=4,
	amplitude=2.9,post=lineto,
	post length=3pt
}] (0,0)--(-1.62,1.17);

	\draw [-,
line join=round,
decorate, decoration={
	zigzag,
	segment length=4,
	amplitude=2.9,post=lineto,
	post length=3pt
}] (0,0)--(-1.62,-1.17);
	
	\draw [-,
	line join=round,
	decorate, decoration={
		zigzag,
		segment length=4,
		amplitude=2.9,post=lineto,
		post length=3pt
	}] (0,0)--(+0.68,1.90);
	
	\draw [-,
	line join=round,
	decorate, decoration={
		zigzag,
		segment length=4,
		amplitude=2.9,post=lineto,
		post length=3pt
	}] (0,0)--(-0.68,1.90);
	
	\draw [-,
	line join=round,
	decorate, decoration={
		zigzag,
		segment length=4,
		amplitude=2.9,post=lineto,
		post length=3pt
	}] (0,0)--(-0.68,-1.90);
	
	\draw [-,
	line join=round,
	decorate, decoration={
		zigzag,
		segment length=4,
		amplitude=2.9,post=lineto,
		post length=3pt
	}] (0,0)--(2,0);

	\draw [-,
	line join=round,
	decorate, decoration={
		zigzag,
		segment length=4,
		amplitude=2.9,post=lineto,
		post length=3pt
	}] (0,0)--(-2,0);

	\end{tikzpicture}
	\caption{Graph of Example~\ref{ex:star2}}\label{subfig:star2}
	\end{subfigure}
	\caption{(a) \text{Graph of Example~\ref{ex:star}}. A star with $n$ petals along with one additional edge. (b) Graph of Example~\ref{ex:star2}. A star graph with $\Theta(n^{0.7})$ petals, along with one additional edge. Each zigzag represents a path of connected cliques with effective resistance diameter $O(1)$.}\label{fig:star}
\end{figure}
\begin{example}[Star Graph Plus Edge]\label{ex:star}
	Suppose that graph $G=(V,E)$ is a ``star" with a center and $n$ petals along with one additional edge that connects a pair of petals, i.e., $V=\{v_1,v_2,\dots,v_n\}$ and $E=\{(v_1,v_2),(v_1,v_3),\cdot,(v_1,v_n)\}\cup\{(v_2,v_3)\}$ (see Figure~\ref{subfig:star1}).
	
	 Clearly, for edge $(v_2,v_3)$, $R_{v_2v_3} = \frac{2}{3}$. Suppose that we want to recover this edge by  examining an electrical flow vector other than $\mathbf{f}_{v_2v_3}$. We can in fact pick an arbitrary vertex $x\in V \setminus v_2 $ and send one unit of flow to $v_2$. Regardless of the choice of $x$, edge $(v_2,v_3)$ contributes an $\Omega(1)$ fraction of the energy of the flow, and thus can be recovered by applying heavy hitters to $\mathbf{f}_{xv_2}$. Similarly, for any $v_i$, when one unit of flow is sent from $x$ to $v_i$, at least a constant fraction of the energy is contributed by edge $(x,v_i)$. So, all high effective resistance edges in this graph (all edges) can be recovered using $n-1$ simple flow vectors, i.e., $\{\mathbf{f}_{xv_1},\dots ,\mathbf{f}_{xv_n}\}$.
\end{example}

Of course, the graph in Example~\ref{ex:star} has only $n$ edges, and so could be stored explicitly in the streaming setting, without needing to recover edges from heavy hitter queries. However, we can give a similar example which is in fact dense.


\begin{example}[Thick Star Plus Edge] \label{ex:star2}
	Suppose that graph $G$ is a dense version of the previous example as follows: it has a center and $\Theta(n^{0.7})$ petals. Each petal consists of a chain of $\Theta(n^{0.2})$ cliques of size $n^{0.1}$, where each pair of consecutive cliques is connected with a complete bipartite graph. One can verify that the effective resistance diameter of each petal is $\Theta(1)$. Now, we add an additional edge, $e$, that connects an arbitrary node in the leaf of one petal to a node in the leaf of another petal (see Figure~\ref{subfig:star2}). 
	
	As in Example \ref{ex:star}, $e$ is heavy, with $R_e = \Theta(1)$. In fact, it is the only heavy edge in the graph.
	One can verify that, similar to Example~\ref{ex:star}, if we let $C_2$ and $C_3$ denote the cliques that $e$ connects, choosing an arbitrary vertex $x$ and sending flow to any node in $C_2$ and then to any node in $C_3$, will give an electrical flow vector where $e$ contributes an $\Omega(1)$ fraction of the energy. Thus, $e$ can be recovered by applying heavy hitters to these vectors. Consequently, using $n$ test vectors (sending flow from $x$ to each other node in the graph) one can recover all heavy edges of this example. 
\end{example}

Unfortunately, it is possible  to give an example where the above simple procedure of checking the flow from an arbitrary vertex to all others fails.
\begin{example}[Thick Line Plus Edge] \label{ex:1}
	Suppose that graph $G=(V,E)$ is a thick line, consisting of $n^{0.9}$ set of points (clusters) where any two consecutive clusters form a complete bipartite graph. Formally,  $V=\{v_1,v_2,\dots,v_n\}=C_1\cup C_2 \cup \dots \cup C_{n^{0.9}}$, where $C_i$'s are disjoint sets of size $n^{0.1}$ and $$E= \bigcup_{i=1}^{n^{0.9}-1} C_i \times C_{i+1}. $$ Also, add an edge $e=(u,v)$ such that $u\in C_1$ and $v\in C_{n^{0.2}}$ (see Figure~\ref{fig:line2}). 
	
	One can verify that $R_e = \Omega(1)$. However,
	if one picks an arbitrary vertex $x\in V$ and sends one unit of flow each other vertex, running $\ell_2$-heavy hitters on each of these flows will not recover edge $e$ if $x$ is far from $u$ and $v$ in the thick path. Any flow that must cross $(u,v)$ will have very large energy due to the fact that it must travel a long distance to the clusters containing these vertices, so $e$ will not contribute non-trivial fraction. 
\end{example}
\begin{figure}
	\centering
	\begin{tikzpicture}[line width=0.3mm]
	
	\fill [color=black] (3.25,0) circle (1pt);
	\fill [color=black] (3.5,0) circle (1pt);
	\fill [color=black] (3.75,0) circle (1pt);	
	
	\fill [color=black] (0,0) circle (3pt);	
	\fill [color=black] (1,0) circle (3pt);
	\fill [color=black] (2,0) circle (3pt);
	\fill [color=black] (3,0) circle (3pt);
	\fill [color=black] (4,0) circle (3pt);
	\fill [color=black] (5,0) circle (3pt);
	\fill [color=black] (6,0) circle (3pt);
	
	\fill [color=black] (6.25,0) circle (1pt);
	\fill [color=black] (6.5,0) circle (1pt);
	\fill [color=black] (6.75,0) circle (1pt);

	\fill [color=black] (7,0) circle (3pt);
	\fill [color=black] (8,0) circle (3pt);
	\fill [color=black] (9,0) circle (3pt);
	\fill [color=black] (10,0) circle (3pt);
	
	\draw (0,-0.75) node[anchor=south] {$C_1$} ;
	\draw (1,-0.75) node[anchor=south] {$C_2$} ;
	\draw (2,-0.75) node[anchor=south] {$C_3$} ;
	\draw (3,-0.75) node[anchor=south] {$C_4$} ;
	\draw (5,-0.75) node[anchor=south] {$C_{n^{0.2}}$} ;
	\draw (10,-0.75) node[anchor=south] {$C_{n^{0.9}}$} ;
	
	\draw [-,
	line join=round,
	decorate, decoration={
		zigzag,
		segment length=4,
		amplitude=2.9,post=lineto,
		post length=3pt
	}] (0,0)--(1,0);
	\draw [-,
	line join=round,
	decorate, decoration={
		zigzag,
		segment length=4,
		amplitude=2.9,post=lineto,
		post length=3pt
	}] (1,0)--(2,0);
	\draw [-,
	line join=round,
	decorate, decoration={
		zigzag,
		segment length=4,
		amplitude=2.9,post=lineto,
		post length=3pt
	}] (2,0)--(3,0);
	\draw [-,
	line join=round,
	decorate, decoration={
		zigzag,
		segment length=4,
		amplitude=2.9,post=lineto,
		post length=3pt
	}] (4,0)--(5,0);
	\draw [-,
	line join=round,
	decorate, decoration={
		zigzag,
		segment length=4,
		amplitude=2.9,post=lineto,
		post length=3pt
	}] (5,0)--(6,0);

	\draw [-,
	line join=round,
	decorate, decoration={
		zigzag,
		segment length=4,
		amplitude=2.9,post=lineto,
		post length=3pt
	}] (7,0)--(8,0);
	\draw [-,
	line join=round,
	decorate, decoration={
		zigzag,
		segment length=4,
		amplitude=2.9,post=lineto,
		post length=3pt
	}] (8,0)--(9,0);
	\draw [-,
	line join=round,
	decorate, decoration={
		zigzag,
		segment length=4,
		amplitude=2.9,post=lineto,
		post length=3pt
	}] (9,0)--(10,0);
	\draw [red,dashed]   (0,0) to[out=+90,in=+90, distance=2cm ] (5,0);
	\end{tikzpicture}
	\caption{\text{Graph of Example~\ref{ex:1}}. Each $C_i$ represents a cluster with $n^{0.1}$ vertices (with no internal edges) and each zigzag represents the edges of a complete bipartite graph between consecutive $C_i$'s.}\label{fig:line2}
\end{figure}
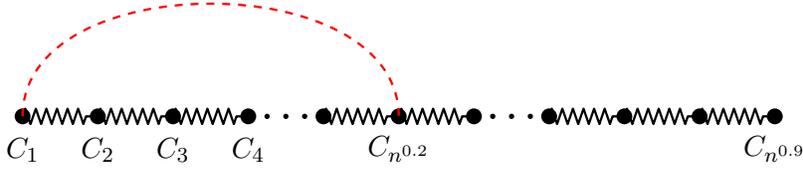

Fortunately, the failure of our recovery method in Example~\ref{ex:1} is due to a simple fact: the effective resistance diameter of the graph is large. When the effective resistance diameter is small (as in Examples~\ref{ex:star} and ~\ref{ex:star2}) the strategy always suffices. This follows from the following simple observation:
\begin{observation}\label{obs:obs1}
	For a graph $G=(V,E)$, suppose that for an edge $e = (u,v)\in E$, one has $$R_{e}\ge \beta.$$ Then, for any $x \in V$, in at least one of these settings, edge $e$ carries at least $\beta/2$ units of flow:
	\begin{enumerate}
		\item {One unit of flow is sent from $x$ to $u$.}
		\item {One unit of flow is sent from $x$ to $v$.}
	\end{enumerate}
\end{observation}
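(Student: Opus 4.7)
The plan is to exploit the linearity of electrical flows and then apply a triangle-inequality argument at the coordinate indexed by $e = (u,v)$. Recall from equation~\eqref{eq:flow} that for any source-sink pair $(s,t)$, the flow vector on the unweighted graph $G$ is $\mathbf{f}_{st} = B L^{+}(\chi_s - \chi_t)$, and its value on the coordinate $(u,v)$ is $\mathbf{b}_{uv}^\top L^{+}\mathbf{b}_{st}$.

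First, I would decompose $\chi_u - \chi_v = (\chi_x - \chi_v) - (\chi_x - \chi_u)$ and push this through the linear map $B L^{+}$, obtaining the superposition identity
\begin{equation*}
\mathbf{f}_{uv} \;=\; \mathbf{f}_{xv} - \mathbf{f}_{xu}.
\end{equation*}
Next, I would evaluate both sides on the coordinate $e=(u,v)$. By the definition of effective resistance in~\eqref{eq:Eff-Res} together with~\eqref{eq:flow}, the left-hand side equals $\mathbf{b}_{uv}^\top L^{+}\mathbf{b}_{uv} = R_{uv} \geq \beta$. Therefore
\begin{equation*}
\mathbf{f}_{xv}(uv) - \mathbf{f}_{xu}(uv) \;=\; R_{uv} \;\geq\; \beta.
\end{equation*}

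Finally, applying the triangle inequality to the displayed identity gives $|\mathbf{f}_{xv}(uv)| + |\mathbf{f}_{xu}(uv)| \geq \beta$, so at least one of the two summands is at least $\beta/2$. In the first case, sending one unit from $x$ to $v$ already routes at least $\beta/2$ units of (signed) flow across edge $e$; in the second case, the same is true for the flow from $x$ to $u$. This is precisely the claimed dichotomy, completing the proof.

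The argument is essentially one line once the superposition identity is in place, so I do not anticipate a substantive obstacle. The only thing to be careful about is signs and orientation conventions: the claim speaks of the magnitude of flow on an undirected edge, while $\mathbf{f}_{uv}(uv)$ is the signed flow with respect to the orientation induced by $\mathbf{b}_{uv}$. Since the statement is about the absolute amount of flow carried by $e$, the signed identity combined with the triangle inequality suffices, and no finer case analysis is needed.
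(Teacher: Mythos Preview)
Your proof is correct and is essentially the same as the paper's: the paper proves the observation via Lemma~\ref{lem:halfeff}, which decomposes $\mathbf{b}_{uv}=\mathbf{b}_{ux}+\mathbf{b}_{xv}$ and applies the triangle inequality to $\mathbf{b}_{uv}^\top D\mathbf{b}_{uv}$ with $D=L^+$, which is exactly your superposition identity evaluated at coordinate $(u,v)$. The only cosmetic difference is that you phrase things in terms of flow vectors $\mathbf{f}_{st}(uv)$ while the paper writes the same quantities as $\mathbf{b}_{uv}^\top L^+\mathbf{b}_{st}$.
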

This observation follows formally from the following simple lemma.
\begin{lemma}{\label{lem:halfeff}}
	For a graph $G=(V,E)$, suppose that $D\in \R^{V\times V}$ is a PSD matrix. Then, for any pair of vertices $(u,v)\in {V\choose2}$ and for any vertex $x \in V\setminus \{u,v\}$, 
	\begin{align*}
	\max\{|\mathbf{b}_{xu}^\top D \mathbf{b}_{uv}|,|\mathbf{b}_{xv}^\top D \mathbf{b}_{uv}|\} \ge \frac{\mathbf{b}_{uv}^\top D \mathbf{b}_{uv}}{2}.
	\end{align*}
\end{lemma}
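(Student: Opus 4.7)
The plan is to exploit the simple additive identity among the edge indicator vectors $\mathbf{b}_{xu}$, $\mathbf{b}_{xv}$, and $\mathbf{b}_{uv}$. Concretely, since $\chi_u - \chi_v = (\chi_x - \chi_v) - (\chi_x - \chi_u)$, we have the vector identity
\begin{equation*}
\mathbf{b}_{uv} \;=\; \mathbf{b}_{xv} - \mathbf{b}_{xu}.
\end{equation*}

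First I would left-multiply both sides of this identity by $\mathbf{b}_{uv}^\top D$ (using the linearity of the bilinear form) to obtain
\begin{equation*}
\mathbf{b}_{uv}^\top D \mathbf{b}_{uv} \;=\; \mathbf{b}_{uv}^\top D \mathbf{b}_{xv} \;-\; \mathbf{b}_{uv}^\top D \mathbf{b}_{xu}.
\end{equation*}
Since $D$ is PSD the left-hand side is a nonnegative real number, so applying the triangle inequality to the right-hand side yields
\begin{equation*}
\mathbf{b}_{uv}^\top D \mathbf{b}_{uv} \;\le\; |\mathbf{b}_{xv}^\top D \mathbf{b}_{uv}| \;+\; |\mathbf{b}_{xu}^\top D \mathbf{b}_{uv}|,
\end{equation*}
where I also used symmetry of $D$ to flip the order of the factors in each term. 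At least one of the two summands on the right must be at least half of the left-hand side, which is exactly the claim.

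There is essentially no obstacle here; the only thing to verify is that we really do not need any additional structure on $D$ beyond the PSD assumption (we do not: we only use symmetry, which is implied, and nonnegativity of the quadratic form $\mathbf{b}_{uv}^\top D \mathbf{b}_{uv}$). Note also that the hypothesis $x \notin \{u,v\}$ is not actually required for the algebraic argument — if $x = u$ or $x = v$ then one of the two terms vanishes and the other equals $\mathbf{b}_{uv}^\top D \mathbf{b}_{uv}$ exactly, so the bound holds trivially; the restriction in the statement merely reflects the intended use in Observation~\ref{obs:obs1}.
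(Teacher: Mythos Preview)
Your proof is correct and is essentially identical to the paper's own argument: both use the additive identity $\mathbf{b}_{uv}=\mathbf{b}_{ux}+\mathbf{b}_{xv}$ (equivalently $\mathbf{b}_{xv}-\mathbf{b}_{xu}$), expand $\mathbf{b}_{uv}^\top D\mathbf{b}_{uv}$ as a sum of two terms, and conclude that one of them must be at least half in absolute value. Your version is in fact slightly more careful, making explicit the use of PSD (for nonnegativity of the left side) and of symmetry of $D$.
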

\begin{proof}
	Note that 
	\begin{align*}
	\mathbf{b}_{uv}^\top D \mathbf{b}_{uv}&=\left(\mathbf{b}_{ux}^\top + \mathbf{b}_{xv}^\top \right) D \mathbf{b}_{uv}
	=\mathbf{b}_{ux}^\top D \mathbf{b}_{uv}+\mathbf{b}_{xv}^\top D \mathbf{b}_{uv},
	\end{align*}
	and hence, the claim holds. 
\end{proof}
Consider the setting where $\beta = \frac{1}{\log n}$. The observation guarantees that edge $e$ contributes at least $\frac{1}{4\log^2 n}$ energy to either flow $\mathbf{f}_{xv}$ or $\mathbf{f}_{xu}$. Thus, we can recover this edge via $\ell_2$ heavy hitters, as long as the total energy $\norm{\mathbf{f}_{xv}}_2^2$ or $\norm{\mathbf{f}_{xu}}_2^2$ is not too large. Note that this energy is just equal to the effective resistance $R_{xv}$ between $x$ and $v$ (respectively $x$ and $u$). Thus it is bounded if the effective resistance diameter is small, demonstrating that our simple recovery procedure always succeeds in this setting. For example, if the diameter is $O(1)$, both $\norm{\mathbf{f}_{xv}}_2^2=O(1)$ and $\norm{\mathbf{f}_{xu}}_2^2 = O(1)$, and so by Observation \ref{obs:obs1}, edge $e = (u,v)$ contributes at least a $\Theta \left ( \frac{1}{\log^2 n} \right )$ fraction of the energy of at least one these flows.

We next explain how to extend this procedure to handle general graphs, like that of Example \ref{ex:1}.

\paragraph{Ball carving in effective resistance metric:} When the effective resistance diameter of $G$ is large, if we attempt to recover $e$ using $\ell_2$-heavy hitters on the flow vectors $\mathbf{f}_{xu}$ and $\mathbf{f}_{xv}$, for an arbitrary chosen $x\in V$, we may fail if the effective resistance distance between $x$ and $v$ or $u$ ($\norm{\mathbf{f}_{xv}}_2^2$ or $\norm{\mathbf{f}_{xu}}_2^2$) is large. This is exactly what we saw in Example \ref{ex:1}.

However, using the fact that $||\mathbf{f}_{xv}||_2^2=R_{xv}$, our test will succeed if we find a vertex $x$, which is close to $u$ and $v$ in the effective resistance metric. This suggests that we should partition the vertices into cells of fairly small effective resistance diameter, ensuring that both endpoints of an edge $(u, v)$ that we would like to recover fall in the same cell with nontrivially large probability. This is exactly what standard metric decomposition techniques achieve through a ball-carving approach, which we use, as described next.

\paragraph{Partitioning the graph into low effective resistance diameter sets:} It is well-known that using Johnson-Lindenstrauss (JL) dimension reduction (see Lemma~\ref{lem:JL}), one can embed vertices of a graph in $\R^q$, for $q=O(\log n)$, such that the Euclidean distance squares correspond to a constant factor multiplicative approximation to effective resistance of corresponding vertices. We then partition $\R^q$  intro $\ell_\infty$ balls centered at points of a randomly shifted infinite $q$-dimensional grid with side length $w>0$, essentially defining a hash function that maps every point in $\R^q$ to the nearest point on the randomly shifted grid.  We then bound the maximum effective resistance of pair of vertices in the same bucket (see Claim~\ref{lem:celldiambound}), and show how an appropriate choice of the width $w$ ensures that $u$ and $v$ belong to the same cell, with a probability no less than a universal constant (see Claim~\ref{fact:constantedge}). This ensures that in at least one of $O(\log n)$ independent repetitions of this process with high probability, $u$ and $v$ fall into the same cell.  We note that the parameters of our partitioning scheme can be improved somewhat using Locality Sensitive Hashing techniques (e.g.,~\cite{IndykM98,DatarIIM04,AndoniI06,AndoniINR14,AndoniR15}). More precisely, LSH techniques would improve the space complexity by polylogarithmic factors at the expense of slightly higher runtime (the best improvement in space complexity would result from Euclidean LSH~\cite{AndoniI06,AndoniR15}, at the cost of an $n^{o(1)}$ additional factor in runtime). However, since the resulting space complexity does not quite match the lower bound of $\Omega(n\log^3 n)$ due to ~\cite{NelsonY19}, we leave the problem of fine-tuning the parameters of the space partitioning scheme as an exciting direction for further work.

\paragraph{Sampling edges with probability proportional to effective resistances:} The above techniques can actually be extended to recover edges of any specific target effective resistance. Broadly speaking, if we aim to capture edges  of effective resistance about $R$, we can afford to lower our grid cell size proportionally to $R$. Unfortunately, these edges don't contribute enough to the flow vector to be recoverable. Thus, we will also subsample the edges of the graph at rate approximately proportional to $R$ to allow us to detect the target edges while also subsampling them.

\subsection{Our algorithm and proof of main result}\label{sec:main-alg}

As mentioned in the introduction, our algorithm consists of two phases. In the first phase, our algorithm maintains sketches of the stream, updating the sketches at each edge addition or deletion. Then, in the second phase, when queried, it can recover a spectral sparsifier of the graph from the sketches that have been maintained in the first phase. In the following lines, we give a brief overview of each phase:
\paragraph{Updating sketches in the dynamic stream.} Our algorithm maintains a set of sketches $\Pi B$, of size $O(n\polylog(n)\cdot\e^{-2})$, and updates them each time it receives an edge addition or deletion in the stream. $\Pi B$ consists of multiple sketches $(\Pi_s^\ell B_s^\ell)_{\ell,s}$ where $B_s^\ell$ is a subsampling of the edges in $B$ at rate $2^{-s}$ and $\Pi^\ell_s$ is an $\ell_2$ heavy hitters sketch. In section~\ref{sec:sketches} we discuss these sketches in more detail and we show that the update time for each edge addition or deletion is $O(\polylog(n)\cdot \e^{-2})$. 
\paragraph{Recursive sparsification:}
After receiving the updates in the form of a dynamic stream, as described above, our algorithm uses the maintained sketches to recover a spectral sparsifier of the graph. This is done recursively, and heavily relies on the idea of a chain of coarse sparsifiers described in Lemma~\ref{lem:chain_coarse}. For a regularization parameter $\ell$ between $0$ and $d = O(\log n)$ the task of \textsc{Sparsify}($\Pi^{\le\ell} B,\ell,\epsilon$) is to output a spectral sparsifier to matrix $L_\ell$, which is defined as follows:
	\begin{align*}
L_{\ell}=\begin{cases}
L_G+\frac{\lambda_u}{2^\ell}I & \text{if $0\le\ell\le d$}\\
L_G & \text{if $\ell=d+1.$}
\end{cases}
\end{align*}
where $d = \ceil{\log_2 \frac{\lambda_u}{\lambda_\ell}}$ (see Lemma~\ref{lem:chain_coarse} for more details about chain of coarse sparsifiers). Note that the call receives a collection of sketches $\Pi^{\le \ell} B$ as input that suffices for all recursive calls with smaller values of $\ell$. So, in order to get a sparsifier of the graph we invoke \textsc{Sparsify}($\Pi^{\le d+1} B,d+1,\epsilon$), which receives all the sketches maintained throughout the stream and passes the required sketches to the recursive calls in line~\ref{line:HE-CallSp} of Algorithm~\ref{alg:main-sparsify}. This recursive algorithm takes as input $\Pi^{\le\ell}B$ corresponding to the parts of the sketch used to recover a spectral approximation to $L_{k}$ for all $k \leq \ell$, $\ell$ corresponding to the current $L_\ell$ which we wish to recover a sparsifier of, and $\epsilon$ corresponding to the desired sparsification accuracy. The algorithm first invokes itself recursively to recover $\tilde{K}$, a spectral approximation for $L_{\ell - 1}$ (or uses the trivial  approximation $\lambda_u I$ when $\ell = 0$). The effective resistance metric induced by $\tilde{K}$ is then approximated using the Johnson-Lindenstrauss lemma (JL). Finally, the procedure  \textsc{RecoverEdges} (i.e. Algorithm~\ref{alg:main-heavy-edge}) uses this metric and the heavy hitters sketches $(\Pi^\ell_s B_s^\ell)_s$. We formally state our algorithm, Algorithm~\ref{alg:main-heavy-edge} below.
\begin{algorithm}[H]
	\caption{\textsc{Sparsify}($\Pi^{\le\ell} B,\ell,\epsilon$)}  
	\label{alg:main-sparsify} 
	\begin{algorithmic}[1]
		
		\Procedure{Sparsify($\Pi^{\le\ell}B,\ell,\epsilon$)}{}
			\State $W \gets 0^{n\times n}$
			\If{$\ell= 0$} 
				\State $\wt{K}\gets \lambda_u I$\label{line:leaf-case-K}
			\Else
				\State $\wt{K} \gets \frac{1}{2(1+\e)}\textsc{Sparsify}(\Pi^{\le\ell-1}B,\ell-1,\epsilon)$ \label{line:HE-CallSp} 
			\EndIf		
			\State $\wt{B} \gets$ \text{the edge vertex incident matrix of $\wt{K}$ (discarding the regularization)}
			\State $\wt{W} \gets$ \text{the diagonal matrix of weights over the edges of  $\wt{K}$ (discarding the regularization)} 
			\State $Q\gets q\times {n \choose 2}$ is a random $\pm$1 matrix\label{line:setQ} for $q\gets 1000 \log n$ 
			\State \Comment{\emph{$q$ above is chosen as it suffices to get a $(1\pm \frac{1}{5})$ approximation from JL}}  \label{line:setq}  
			\State  $M\gets \frac1{\sqrt{q}}Q\wt W^{1/2} \wt B\wt K^+$ \Comment{\emph{$M$ is such that  $R_{uv}^{\wt{K}}\le\frac{5}{4}||M(\chi_u-\chi_v)||_2^2\le \frac{3}{2} R_{uv}^{\wt{K}}$ w.h.p.}} \label{line:setM}
			\State \Comment{\emph{$R^{\wt{K}}$ is the effective resistance metric in $\wt{K}$}}
			\For {$s\in [-\log\left(3\cdot c_2\cdot\log n\cdot\e^{-2}\right),10\log n]$} \label{line:s-for} 
				\State $E_s\gets\textsc{RecoverEdges}(\Pi^\ell_{s^+}B^\ell_{s^+},M,\wt K^+,s,q,\e)$\label{line:recover} \Comment{\emph{We use the notation $s^+=\text{max}(0, s)$}}
				\For{$e\in E_s$} \State $W(e,e)\gets2^{-(s^+)}$ \EndFor
			\EndFor
			\If {$\ell= \left\lceil \log_2 \frac{\lambda_u}{\lambda_\ell}\right\rceil+1$}
				\State $\gamma\gets 0$
			\Else
				\State $\gamma \gets \frac{\lambda_u}{2^\ell}$
			\EndIf
			\State \Return $B_n^\top W B_n+ \gamma I$.
		\EndProcedure
	\end{algorithmic}
\end{algorithm}

Algorithm~\ref{alg:main-heavy-edge} (the \textsc{RecoverEdges} primitive) is the core of Algorithm~\ref{alg:main-sparsify}. It receives a parameter $s$ as input, and its task is to recover edge of effective resistance $\approx \frac{\e^2}{\log n} 2^{-s}$ from a sample at rate $\min(1, O(\frac1{\e^2}\log n\cdot 2^{-s}))$ from an appropriate sketch. It is convenient to let $s$ range from $-O(\log (\log n/\e^2))$ to $O(\log n)$, so that the smallest value of $s$ corresponds to edges of constant effective resistance. That way the sampling level corresponding to $s$ is simply equal to $s^+:=\max(0, s)$. Therefore Algorithm~\ref{alg:main-heavy-edge} takes as input  a heavy hitters sketch $\Pi_{s^+}^\ell B_{s^+}^\ell$ of $B_{s^+}^\ell$, the edge incidence matrix of $L_\ell$  sampled at rate $2^{- s^+}$, an approximate effective resistance embedding $M$, the target sampling probability $2^{-s}$, the dimension $q$ of the embedding, and the target accuracy $\epsilon$. This procedure then performs the previously described random grid hashing of the points using the effective resistance embedding and queries the heavy hitters sketch to find the edges sampled at the appropriate rate.

The development and analysis of \textsc{RecoverEdges} (Algorithm~\ref{alg:main-heavy-edge})  is  the main technical contribution of our paper.  In the rest of the section we prove correctness of Algorithm~\ref{alg:main-heavy-edge} (Lemma~\ref{lm:main}, our main technical lemma), and then provide a correctness proof for Algorithm~\ref{alg:main-sparsify}, establishing Theorem~\ref{lem:corr-main-HE}. We then put these results together with runtime and space complexity bounds to obtain a proof of Theorem~\ref{thm:main}.

\begin{algorithm}[h!]
	\caption{\textsc{RecoverEdges}($\Pi^\ell_{s^+}B^\ell_{s^+},M,\wt K^+,s,q,\e$)}  
	\label{alg:main-heavy-edge} 
	\begin{algorithmic}[1]
		\Procedure{RecoverEdges($\Pi^\ell_{s^+}B^\ell_{s^+},M,\wt K^+,s,q,\e$)}{}
			\State $E'\gets\emptyset$.\Comment{\emph{$q$ is the dimension to perform hashing in, $s$ is the sampling level}}
			\State $C \gets$ the constant in the proof of Lemma~\ref{lm:main}
			\State $c_2 \gets$ the oversampling constant of Theorem~\ref{lem:classic_result}
			\State $w\gets2q\cdot\sqrt{\frac{\e^2}{c_2\cdot2^s\cdot\log n}}$.\label{line:w-set}
			\For {$j\in\left[10\log n\right]$}\label{line:j-for}
				\State For each dimension $i\in [q]$, choose $s_i \sim \Unif([0,w])$.
				\State Initialize $H\gets \emptyset$ to an empty hash table
				\For {$u\in V$} \Comment{\emph{Hash vertices to points on randomly shifted grid}}
				\State For all $i\in [q]$, let $\mathcal{G}(u)_i:=\left\lfloor \frac{(M\chi_u)_i-s_i}{w}\right\rfloor$. \label{line:set-hashing}
				\State Insert $u$ into $H$ with key $\mathcal{G}(u)\in \mathbb{Z}^q$ \label{line:set-hashing-hash}
				\State \Comment{\emph{$\mathcal{G}(u)\in \mathbb{Z}^q$ indexes a  point on a randomly shifted grid}}
				\EndFor
				\For {$b\in \text{keys}(H)$}\Comment{\emph{$b\in \mathbb{Z}^q$ indexes a point on a randomly shifted grid}}
					\State $x\gets $arbitrary vertex in $H^{-1}(b)$\label{line:x}
					\For {$v\in H^{-1}(b)\setminus \{x\}$}
					\State  $F \gets \textsc{HeavyHitter}\left(\Pi^\ell_{s^+}B^\ell_{s^+}\wt{K}^+\mathbf{b}_{xv},  \frac{1}{2} \cdot \frac{1}{C\cdot q^3}\cdot \sqrt{\frac{\epsilon^2}{ \log n}}\right)$. \Comment{\emph{As per Lemma~\ref{lem:HH}}} \label{line:heavy-hitter}
					\For {$e \in F$ }\label{line:e-for}
						\State $p_e'\gets\frac54\cdot 	c_2\cdot||Mb_e||_2^2\cdot\log n/\e^2$ \label{def:p_e} 
						\If{$p_e'\in(2^{-s-1},2^{-s}]}$\label{p_e-condition}\label{line:verify}
							\State $E'\gets E'\cup\{e\}$.
						\EndIf
					\EndFor
				\EndFor
				\EndFor
			\EndFor
			\State \Return $E'$.
		\EndProcedure
	\end{algorithmic}
\end{algorithm}

Lemma~\ref{lm:main} below is our main technical lemma. Specifically, Lemma~\ref{lm:main} proves that if  Algorithm~\ref{alg:main-sparsify} successfully  executes all lines before line~\ref{line:s-for}, then each edge is sampled and weighted properly (as required by Theorem~\ref{lem:classic_result}), in the remaining steps. 
\begin{lemma}[Edge Recovery] \label{lm:main}
	Consider an invocation of $\textsc{RecoverEdges}(\Pi^\ell_{s^+} B^\ell_{s^+},M,\wt K,s,q,\e)$ of Algorithm~\ref{alg:main-heavy-edge}, where $\Pi^\ell_{s^+} B^\ell_{s^+}$ is a sketch of the edge incidence matrix $B$ of the input graph $G$ as described in Section~\ref{sec:sketches}, $s$ is some integer, and $\e \in (0,1/5)$. Suppose further that $\wt K$ and $M$ satisfy the following guarantees:
	\begin{description}
		\item[(A)] $\wt{K}$ is such that $ \frac13 \cdot L_\ell \preceq_r \wt{K} \preceq_r  L_\ell$ (see lines~\ref{line:leaf-case-K} and \ref{line:HE-CallSp} of Algorithm~\ref{alg:main-sparsify})
		\item[(B)] $M$ is such that for any pair of vertices $u$ and $v$,  $R_{uv}^{\wt{K}}\le\frac{5}{4}||M(\chi_u-\chi_v)||_2^2\le \frac{3}{2} R_{uv}^{\wt{K}}$ ($R^{\wt{K}}$ is the effective resistance metric in $\wt{K}$; see line~\ref{line:setM} of Algorithm~\ref{alg:main-sparsify})
	\end{description}
Then, with high probability, for every edge $e$, $\textsc{RecoverEdges}(\Pi^\ell_{s^+} B^\ell_{s^+},M,\wt K,s,q,\e)$ will recover $e$ if and only if:
\begin{description}
	\item[(1)] $\frac54\cdot c_2\cdot||Mb_e||_2^2\cdot\log(n)/\e^2\in (2^{-s-1},2^{-s}]$ where $c_2$ is the oversampling constant of Theorem~\ref{lem:classic_result} (see lines~\ref{def:p_e} and \ref{p_e-condition} of Algorithm~\ref{alg:main-heavy-edge}), and
	\item[(2)] edge $e$ is sampled in $B^\ell_{s^+}$.
\end{description}
\end{lemma}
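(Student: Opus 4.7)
The proof has two directions. The \emph{only if} direction is immediate from the structure of Algorithm~\ref{alg:main-heavy-edge}: the call \textsc{HeavyHitter} on line~\ref{line:heavy-hitter} operates on a sketch of $B^\ell_{s^+}\wt K^+\mathbf{b}_{xv}$, whose support is contained in the subsampled edge set of $B^\ell_{s^+}$, so no edge outside that set can ever be returned, establishing the necessity of (2); and line~\ref{line:verify} explicitly discards any candidate whose $p_e'$ does not lie in $(2^{-s-1},2^{-s}]$, establishing the necessity of (1).

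For the \emph{if} direction, fix an edge $e=(u,v)$ satisfying (1) and (2). Combining condition (1) with hypothesis (B), I would first pin $R_{uv}^{\wt K}=\Theta(2^{-s}\e^2/(c_2\log n))$ and $\|Mb_e\|_2^2\le \tfrac{4}{5}\cdot 2^{-s}\e^2/(c_2\log n)$. The plan is to show that one iteration of the outer loop on line~\ref{line:j-for} recovers $e$ with constant probability, so the $10\log n$ independent repetitions succeed with high probability and a union bound over $\binom{n}{2}$ potential edges concludes. In a single iteration, a standard computation on randomly shifted $\ell_\infty$ grids (the same one appealed to in Claim~\ref{fact:constantedge}) shows that the probability that $u$ and $v$ hash to distinct cells is at most $\sqrt{q}\,\|Mb_e\|_2/w$; the choice $w = 2q\sqrt{\e^2/(c_2\, 2^s\log n)}$ on line~\ref{line:w-set} makes this at most $1/\sqrt{5q}$, so $u$ and $v$ land in a common bucket with probability at least $1-1/\sqrt{5q}$.

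Conditioned on this collision, any two vertices $y,z$ in the same bucket differ by at most $w$ in each of the $q$ coordinates of the embedding, so $\|M(\chi_y-\chi_z)\|_2\le \sqrt{q}\,w$, and (B) gives the effective-resistance diameter bound $R_{yz}^{\wt K}\le \tfrac{5}{4}qw^2 = O(q^3\e^2/(c_2\, 2^s\log n))$. Next I would apply Lemma~\ref{lem:halfeff} with $D=\wt K^+$ to the triple $(x,u,v)$, where $x$ is the root chosen on line~\ref{line:x}: at least one of $|\mathbf{b}_{xu}^\top\wt K^+\mathbf{b}_{uv}|$ or $|\mathbf{b}_{xv}^\top\wt K^+\mathbf{b}_{uv}|$ is at least $R_{uv}^{\wt K}/2$, and (WLOG the first) this quantity is precisely the $e$-coordinate of $B^\ell_{s^+}\wt K^+\mathbf{b}_{xu}$, which the algorithm does query since the inner loop iterates over every non-root member of the bucket (and (2) guarantees the coordinate survives subsampling). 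Since $L\preceq L_\ell\preceq 3\wt K$ by (A), the total squared norm satisfies $\|B\wt K^+\mathbf{b}_{xu}\|_2^2 = \mathbf{b}_{xu}^\top\wt K^+ L\wt K^+\mathbf{b}_{xu}\le 3R_{xu}^{\wt K}$, and a standard concentration argument over the sampling of $B^\ell_{s^+}$ further scales this by $O(2^{-s^+})$ with high probability (union-bounded over the $\le n$ buckets). Plugging these estimates together, the ratio of the $e$-coordinate squared to the sampled flow's squared norm is $\Omega(\e^2/(q^3 c_2\log n))$, which dominates the squared threshold $\rho^2 = \e^2/(4C^2 q^6\log n)$ in the \textsc{HeavyHitter} call for $C$ a sufficiently large absolute constant. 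The guarantee of Lemma~\ref{lem:HH} then returns $e$, and since (1) holds the check on line~\ref{line:verify} admits it into $E'$.

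The main obstacle is the simultaneous calibration of the grid width $w$ and the heavy-hitters threshold $\rho$: $w$ must be small enough that a common bucket has $\wt K$-effective-resistance diameter at most polylogarithmic (so the total flow energy is controlled via (A)), yet large enough that the endpoints of any edge $e$ with $p_e'\in(2^{-s-1},2^{-s}]$ collide with constant probability; and $\rho$ must be inverse-polylogarithmic (so the heavy-hitters sketch fits in $\wt O(n)$ space) yet small enough that the edge's contribution exceeds $\rho^2$ times the flow's squared norm. The choices $w=2q\sqrt{\e^2/(c_2 2^s\log n)}$ and $\rho = \Theta(q^{-3}\sqrt{\e^2/\log n})$ thread this needle, and the subsampling factor $2^{-s^+}$ is essential to make the energy budget match when $s>0$.
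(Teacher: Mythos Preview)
Your outline matches the paper's proof essentially step for step: necessity of (1) and (2), the grid collision (Claim~\ref{fact:constantedge}), the bucket-diameter bound (Claim~\ref{lem:celldiambound}), the half-flow guarantee (Lemma~\ref{lem:halfeff}), and the energy bound $\|B\wt K^+\mathbf{b}_{xv}\|_2^2\le 3R^{\wt K}_{xv}$ from assumption~(A). The one place where your sketch hides a real gap is the concentration step: the claim that subsampling ``further scales this by $O(2^{-s^+})$ with high probability'' is not something a standard Chernoff bound delivers here, and as stated it is quantitatively wrong.

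To make Chernoff bite you need two ingredients you did not name, both coming from Fact~\ref{fact:maxphi2}. First, every coordinate of $\wt{\mathbf f}_{xv}=B\wt K^+\mathbf b_{xv}$ is bounded by $\tau:=R^{\wt K}_{xv}$, so after dividing by $\tau$ the summands lie in $[0,1]$. Second, for the normalized expectation $2^{-s}\|\wt{\mathbf f}_{xv}\|_2^2/\tau^2\le 3\cdot 2^{-s}/\tau$ to be $O(\log n/\e^2)$ (so that a Chernoff tail of $n^{-10}$ is achievable), one needs the \emph{lower} bound $\tau\ge \tfrac12 R^{\wt K}_{uv}$, which again follows from Fact~\ref{fact:maxphi2} combined with your WLOG choice from Lemma~\ref{lem:halfeff}. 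Running this through, the subsampled energy is only bounded by $O(\tau^2\log n/\e^2)=O(q^3\cdot 2^{-s})\cdot 3\tau$, not $O(2^{-s})\cdot 3\tau$; consequently the true ratio is $\Omega(\e^2/(q^6\log n))$, not $\Omega(\e^2/(q^3\log n))$. The conclusion survives precisely because the \textsc{HeavyHitter} threshold on line~\ref{line:heavy-hitter} was set with $q^3$ in the denominator (so $\rho^2$ carries $q^6$), absorbing this extra factor.
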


The proof of Lemma~\ref{lm:main} relies on the following two claims regarding the hashing scheme of Algorithm~\ref{alg:main-heavy-edge}. First, Claim~\ref{fact:constantedge} shows that the endpoints of an edge of effective resistance bounded by a threshold most likely get mapped to the same grid point in the random hashing step in line~\ref{line:set-hashing} of Algorithm~\ref{alg:main-heavy-edge}.

\begin{claim}[Hash Collision Probability]\label{fact:constantedge}
	Let $q$ be a positive integer and let the function $\mathcal{G}: \R^q \rightarrow \mathbb{Z}^q$ define a hashing with width $w>0$ as follows:
	\begin{align*}
	\forall i \in [q], \ \ \mathcal{G}(u)_i=\left\lfloor \frac{u_i-s_i}{w}\right\rfloor
	\end{align*}
	where $s_i \sim \Unif[0,w]$, as per line~\ref{line:set-hashing} of Algorithm~\ref{alg:main-heavy-edge}. If for a pair of points $x,y\in \R^q$, $||x-y||_2\le w_0$ and $w\ge2w_0q$, then $\mathcal{G}(x)=\mathcal{G}(y)$ with probability at least $1/2$.
\end{claim}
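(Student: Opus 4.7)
\textbf{Proof plan for Claim~\ref{fact:constantedge}.} The plan is to analyze the hash collision probability coordinate-by-coordinate and then apply a union bound over the $q$ coordinates. Since the hash function $\mathcal{G}$ is defined by applying an independent random shift $s_i \sim \Unif[0,w]$ in each coordinate and then flooring, the events $\{\mathcal{G}(x)_i \ne \mathcal{G}(y)_i\}$ are mutually independent across $i$, but we only need an upper bound on their union.

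First, I would fix a coordinate $i \in [q]$ and compute $\Pr[\mathcal{G}(x)_i \ne \mathcal{G}(y)_i]$ exactly. Observe that $\mathcal{G}(x)_i \ne \mathcal{G}(y)_i$ iff the half-open interval between $x_i$ and $y_i$ (after subtracting $s_i$) contains an integer multiple of $w$. Equivalently, writing $\delta_i := |x_i - y_i|$, the map $s_i \mapsto \mathcal{G}(x)_i - \mathcal{G}(y)_i$ partitions $[0,w]$ into pieces corresponding to whether a grid boundary falls inside the interval between $x_i$ and $y_i$. Since $\delta_i \le \|x-y\|_2 \le w_0 \le w/(2q) \le w$, the bad set for $s_i$ is a single interval of length exactly $\delta_i$, and by translation invariance of the uniform distribution on $[0,w]$,
\begin{equation*}
\Pr[\mathcal{G}(x)_i \ne \mathcal{G}(y)_i] \;=\; \frac{\delta_i}{w}.
\end{equation*}

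Next I would apply a union bound over the $q$ coordinates to obtain
\begin{equation*}
\Pr[\mathcal{G}(x) \ne \mathcal{G}(y)] \;\le\; \sum_{i=1}^q \frac{|x_i - y_i|}{w} \;\le\; \frac{q \cdot w_0}{w},
\end{equation*}
where in the second inequality I used the crude bound $|x_i - y_i| \le \|x - y\|_2 \le w_0$ for each $i$. Plugging in the hypothesis $w \ge 2 w_0 q$ gives the desired $\Pr[\mathcal{G}(x) \ne \mathcal{G}(y)] \le 1/2$, and hence $\Pr[\mathcal{G}(x) = \mathcal{G}(y)] \ge 1/2$.

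The only genuinely substantive step is the per-coordinate calculation, which requires being careful that $\delta_i < w$ so that at most one grid boundary can fall between $x_i$ and $y_i$; this is immediate from $\delta_i \le w_0 \le w/(2q) \le w/2$. Everything else is a union bound. I do not foresee any real obstacle: the only ``slack'' in the argument is the per-coordinate bound $|x_i - y_i| \le \|x-y\|_2$ instead of the tighter $\sum_i |x_i-y_i| \le \sqrt{q}\|x-y\|_2$, but the chosen bound $w \ge 2 w_0 q$ is precisely calibrated so that the weaker estimate already yields collision probability $\ge 1/2$.
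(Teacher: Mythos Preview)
Your proposal is correct and essentially identical to the paper's proof: both compute the per-coordinate miss probability as $|x_i-y_i|/w$, bound it by $\|x-y\|_2/w \le 1/(2q)$, and apply a union bound over the $q$ coordinates. If anything, your version is slightly more careful in explicitly noting why $\delta_i < w$ guarantees the per-coordinate formula holds exactly.
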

\begin{proof}
	First note that by union bound
\begin{equation}\label{eq:19hg9geweg}
\mathbb P(\mathcal G(x)\neq\mathcal G(y)) =\mathbb P(\exists i:\mathcal G(x)_i\neq\mathcal G(y)_i) \le\sum_{i=1}^q\mathbb P(G(x)_i\neq G(y)_i) ~.
\end{equation}
	Now let us bound each term of the sum.
	\begin{equation}\label{eq:239hg23g3g}
	\begin{split}
		\mathbb P(\mathcal G(x)_i\neq\mathcal G(y)_i)&=\mathbb P\left(\left\lfloor\frac{x_i-s_i}{w}\right\rfloor\neq\left\lfloor\frac{y_i-s_i}{w}\right\rfloor\right)\\
		&=\frac{|x_i-y_i|}{w}\\
		&\le\frac{||x-y||_2}{w}\\
		&\le\frac1{2q}
	\end{split}
	\end{equation}
	Combining~\eqref{eq:19hg9geweg} and~\eqref{eq:239hg23g3g}, we get that $\mathbb P(\mathcal G(x)\neq\mathcal G(y))\leq 1/2$ as claimed.
\end{proof}

The next claim, Claim~\ref{lem:celldiambound} bounds the effective resistance diameter of buckets in the hash table constructed in line~\ref{line:set-hashing-hash} of Algorithm~\ref{alg:main-heavy-edge}.

\begin{claim}[Hash Bucket Diameter]{\label{lem:celldiambound}}
	Let the function $\mathcal{G}: \R^q \rightarrow \mathbb{Z}^q$, for some integer $q$, define a hashing with width $w>0$ as follows:
	\begin{align*}
	\forall i \in [q], \ \ \mathcal{G}(u)_i=\left\lfloor \frac{u_i-s_i}{w}\right\rfloor
	\end{align*}
	where $s_i \sim \Unif[0,w]$, as per line~\ref{line:set-hashing} of Algorithm~\ref{alg:main-heavy-edge}.
	For any pair of points $u,v \in \R^q$, such that $\mathcal{G}(u)=\mathcal{G}(v)$, one has $$||u-v||_2 \le w\cdot \sqrt{q}.$$
\end{claim}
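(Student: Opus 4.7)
The plan is to reduce this to a coordinate-wise statement and then combine via the Euclidean norm. The key observation is that two points hash to the same grid cell exactly when they fall into the same half-open interval of length $w$ along every one of the $q$ coordinates.

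First I would fix an arbitrary coordinate $i \in [q]$ and use the assumption $\mathcal{G}(u)_i = \mathcal{G}(v)_i$, which by definition means $\lfloor (u_i - s_i)/w \rfloor = \lfloor (v_i - s_i)/w \rfloor$. Setting this common integer equal to $k$, both $(u_i - s_i)/w$ and $(v_i - s_i)/w$ lie in the half-open interval $[k, k+1)$, so both $u_i$ and $v_i$ lie in $[s_i + kw,\, s_i + (k+1)w)$, an interval of length $w$. This immediately yields $|u_i - v_i| \le w$ for every coordinate $i$.

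Second, I would combine these coordinate-wise bounds via the Euclidean norm:
\begin{equation*}
\|u - v\|_2 = \sqrt{\sum_{i=1}^{q}(u_i - v_i)^2} \le \sqrt{\sum_{i=1}^{q} w^2} = w\sqrt{q},
\end{equation*}
which is exactly the claimed bound.

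There is no real obstacle here; the claim is essentially the statement that the $\ell_\infty$-diameter of a grid cell of side $w$ is $w$, together with the standard inequality $\|\cdot\|_2 \le \sqrt{q}\,\|\cdot\|_\infty$ in $\R^q$. The only thing to be slightly careful about is handling the floor function precisely (i.e., noting that the half-open intervals $[s_i + kw, s_i + (k+1)w)$ have length exactly $w$, giving a non-strict bound $|u_i - v_i| \le w$ rather than a strict one), but this does not affect the statement.
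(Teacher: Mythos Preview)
Your proposal is correct and follows essentially the same argument as the paper: bound each coordinate difference by $w$ using the definition of the floor-based hashing, then sum across the $q$ coordinates under the square root. Your write-up is in fact slightly more detailed than the paper's (which simply asserts $|u_i-v_i|\le w$ without spelling out the half-open interval reasoning), but the route is identical.
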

\begin{proof}
	Since $\mathcal{G}(u)=\mathcal{G}(v)$, then 
	\begin{align*}
	||u-v||_2&=\sqrt{\sum_{i\in q} (u_i-v_i)^2}\\
	&\le \sqrt{w^2 \cdot q}&&\text{since }\forall i \in q, |u_i-v_i|\le w\\
	&=w\cdot \sqrt{q}.
	\end{align*}
\end{proof}

Using Claim~\ref{fact:constantedge} and Claim~\ref{lem:celldiambound} we now prove Lemma~\ref{lm:main}.

\begin{proofof}{Lemma~\ref{lm:main}}
	 Let $p_e':=\frac54\cdot c_2\cdot||Mb_e||_2^2\cdot\log(n)/\e^2$. First note that both of conditions {\bf (1)} and {\bf (2)} are necessary. Indeed, if $e$ is not sampled in $B_s$, it will never be returned by \textsc{HeavyHitter} in line~\ref{line:heavy-hitter}, and if $p'_e\not\in(2^{-s-1},2^{-s}]$ then $e$ will not be added to $E'$ due to line~\ref{line:verify} of Algorithm~\ref{alg:main-heavy-edge}. It remains to show that the two conditions are sufficient to recover $e$ with high probability.

	For an edge $(u,v)=e\in E$ satisfying conditions {\bf (1)} and {\bf (2)} we prove that the size of the grid ($w$ as defined in line~\ref{line:w-set} of Algorithm~\ref{alg:main-heavy-edge})  is large enough to capture edge $e$, as described by Claim~\ref{fact:constantedge}. Specifically, we invoke the claim with $w_0=||Mb_e||_2$. Note that we have $w\ge2qw_0$ by the setting of $w$ in line~\ref{line:w-set} and the fact that  
\begin{align*}
	||Mb_e||_2 =\sqrt{\frac45\cdot p_e'\cdot\frac{\e^2}{c_2\cdot\log n}}
	\le\sqrt{\frac{\e^2}{c_2\cdot2^s\cdot\log n}},
\end{align*}
where we used the fact that $p_e'\leq 2^{-s}$. Thus, we have $w\ge2qw_0$ as prescribed by  Claim~\ref{fact:constantedge}, so $u$ and $v$ fall into the same cell with probability at least $1/2$ in a single instance of hashing. Hashing is then repeated $10\log n$ times to guarantee that they fall into the same cell at least once with high probability, see line~\ref{line:j-for} of Algorithm~\ref{alg:main-heavy-edge}.

	Consider now an instance of hashing where $u$ and $v$ fall into the same cell, say $\mathcal C$ (which corresponds to a hash bucket in our hash table $H$). Let $x$ be chosen arbitrarily from $\mathcal{C}$ as per line~\ref{line:x} of Algorithm~\ref{alg:main-heavy-edge} . Our algorithm sends electrical flow from $x$ to both $u$ and $v$ and by Observation~\ref{obs:obs1} in at least one of these flows $e$ will have weight $R_e^{\wt K}/2$. More precisely, by Lemma~\ref{lem:halfeff} invoked with $D=\wt{K}^+$ we have 
	\begin{equation}\label{eq:i23g93hg}
	\max\{|\mathbf{b}_{xu}^\top \wt{K}^+ \mathbf{b}_{uv}|,|\mathbf{b}_{xv}^\top \wt{K}^+ \mathbf{b}_{uv}|\} \ge \frac{\mathbf{b}_{uv}^\top \wt{K}^+ \mathbf{b}_{uv}}{2}=R_e^{\wt K}/2.
	\end{equation}

	 Without loss of generality assume that this is the flow from $x$ to $v$.


	It remains to show, that unlike in Example~\ref{ex:1}, the total energy of the $xv$ flow does not overshadow the contribution of edge $e$. Intuitively this is because the effective resistence of $e$ is proportional to $2^{-s}\cdot\e^2$ and therefore its $\ell_2$-contribution is proportional to $2^{-2s}\cdot\e^4$. On the other hand, the effective resistence diameter of $\mathcal C$ is proportional to $2^{-s}\cdot\e^2$, which bounds the energy of the $xv$ flow before subsampling. Subsampling at rate $2^{-s}$ decreases the energy by a factor of $2^{-s}$ in expectation, and the energy concentrates sufficiently around its expectation with high probability. We prove everything in more detail below. It turns out that the actual ratio between contribution of $e$ and the entire energy of the subsampled flow is polylogarithmic in $n$ and quadratic in $\e$. Therefore, we can afford to store a heavy hitter sketch powerful enough to recover $e$.
	
	Now let $\wt{\mathbf{f}}_{xv}=B  \wt{K}^+(\chi_{x}-\chi_{v})$, and $\wt{\mathbf{f}}_{xu}=B  \wt{K}^+(\chi_{x}-\chi_{u})$. Note that $\mathbf{f}_{xv} \in \R^{{n}\choose{2}}$ is a vector whose nonzero entries are exactly the voltage differences across edge in $G$ when one unit of current is forced from $x$ to $v$ in $\wt{K}$. We have, writing $L$ instead of $L_\ell$ to simplify notation,
	\begin{align*}
	||\wt{\mathbf{f}}_{xv}||_2^2&=(\chi_{x}-\chi_{v})^\top\wt{K}^+ B^\top  B  \wt{K}^+(\chi_{x}-\chi_{v}) \\
	&\le (\chi_{x}-\chi_{v})^\top\wt{K}^+ L \wt{K}^+(\chi_{x}-\chi_{v}) && \text{Since }  B^\top  B  \preceq L\\
	&\leq 3\cdot (\chi_{x}-\chi_{v})^\top\wt{K}^+ \wt{K} \wt{K}^+(\chi_{x}-\chi_{v}) && \text{Since $L \preceq 3\cdot \wt{K}$ by assumption {\bf (A)} } \\
	&= 3\cdot (\chi_{x}-\chi_{v})^\top\wt{K}^+ (\chi_{x}-\chi_{v}) &&\text{~~~~~~~~~~~~~~~~~~~~~~of the lemma}\\
	&= 3\cdot R^{\wt{K}}_{xv}
	\end{align*}
	Moreover we have
	$$\wt{\mathbf{f}}_{xv}(uv)=(\chi_{u}-\chi_{v})^\T \wt{K}^{+} (\chi_{x}-\chi_{v})$$
	and
	$$\wt{\mathbf{f}}_{xu}(uv)=(\chi_{u}-\chi_{v})^\T \wt{K}^{+} (\chi_{x}-\chi_{u})\text{.}$$

	For simplicity, let 
$$\beta:= \frac{\epsilon^2}{ c_2\cdot \log n}.$$
		By~\eqref{eq:i23g93hg}  we have
		\begin{equation}\label{eq:2i3gbibg}
		\begin{split}
		|\wt{\mathbf{f}}_{xv}(uv)|&\ge \frac{b_{uv}^\top \wt{K}^+ b_{uv}}{2}\\
		&\ge \frac{5}{12}\cdot ||M\mathbf{b}_e||_2^2\text{~~~~~~~~~~~~~By assumption {\bf (B)} of the lemma}\\
		&\ge \frac{1}{3}\cdot \frac{1}{2^{s+1}\cdot c_2}\cdot \frac{\eps^2}{\log n} \text{~~~~~~Since $p_e' \ge \frac{1}{2^{s+1}}$}\\
		&=\frac{1}{3}\cdot \frac{\beta}{2^{s+1}}
		\end{split}
		\end{equation}
		
		Since $x,v$ belong to the same cell, by Claim~\ref{lem:celldiambound}, $||M(\chi_u-\chi_v)||_2\le w\cdot \sqrt{ q}$, thus,
		\begin{equation}\label{eq:f2}
		\begin{split}
		||\mathbf{\wt{f}}_{xv}||_2^2&\leq 3\cdot R^{\wt{K}}_{xv}\\
		&\le \frac{5}{4}(w^2\cdot q)\text{~~~~~~~~~~~~~~~Since $R^{\wt{K}}_{xv}\leq \frac{15}{4}||M(\chi_x-\chi_v)||_2^2$ by {\bf (B)}}\\
		&=15q^3\cdot\frac{\e^2}{c_2\cdot2^s\cdot\log n}\text{~~~~By line~\ref{line:w-set} of Algorithm~\ref{alg:main-heavy-edge}}\\
		&=15q^3\cdot \frac{\beta}{2^s}
		\end{split}
		\end{equation}
		Now, let $\mathbf{\wt{f}}_{xv}^{(s)}:=B_s\wt{K}\mathbf{b}_{xv}$ denote an independent sample of the entries of $\mathbf{\wt{f}}_{xv}$ with probability $\frac{1}{2^s}$. We now argue that, if the edge $(u, v)$ is included in $B_s$, then it is recovered with high probability by the heavy hitter procedure \textsc{HeavyHitter} in line~\ref{line:heavy-hitter}. We let $\mathbf{\wt{f}}^{(s)}:=\mathbf{\wt{f}}_{xv}^{(s)}$ and $\mathbf{\wt{f}}:=\mathbf{\wt{f}}_{xv}$ (i.e., we omit the subscript $xv$) to simplify notation.
		
		We will prove a lower bound on $\frac{\mathbf{\wt{f}}^{(s)}(uv)^2}{||\mathbf{\wt{f}}^{(s)}||_2^2}$ that holds with high probability. Note that
		\begin{align}
		||\mathbf{\wt{f}}^{(s)}||_2^2=\sum_{e\in B_s\setminus \{(u,v)\}} \mathbf{\wt{f}}^{(s)}(e)^2+\mathbf{\wt{f}}^{(s)}\left(uv\right)^2
		\end{align}
		For ease of notation let $X:=\sum_{e\in B_s\setminus \{(u,v)\}} \mathbf{\wt{f}}^{(s)}(e)^2$, and let $\tau:=R_{xv}^{\wt{K}}$. Thus, we have for a sufficiently large constant $C>1$
		\begin{equation}\label{eq:932hg932hg}
		\begin{split}
		&\text{Pr}\left(\frac{ \mathbf{\wt{f}}^{(s)}(uv)^2}{||\mathbf{\wt{f}}^{(s)}||_2^2}< \frac{1}{C^2\cdot q^6}\cdot \frac{\epsilon^2}{ \log n}\Big| (u,v)\in B_s \right)\\
		=&\text{Pr}\left(  X > \left(\frac{C^2\cdot q^6\cdot \log n }{\eps^2}-1\right)\cdot\mathbf{\wt{f}}^{(s)}(uv)^2 \Big| (u,v)\in B_s \right)\\
		\le& \text{Pr}\left(  ||\mathbf{\wt{f}}^{(s)}||_2^2 >\frac12 \cdot C^2\cdot q^6\cdot \frac{\log n}{\epsilon^2}\cdot 
		\mathbf{\wt{f}}(uv)^2\right)\\
		\leq& \text{Pr}\left(  \left|\left|\frac{\mathbf{\wt{f}}^{(s)}}{\tau}\right|\right|_2^2 >\frac{1}{\tau^2}\cdot \frac12 \cdot C^2\cdot q^6\cdot \frac{\log n}{\epsilon^2}\cdot \mathbf{\wt{f}}(uv)^2\right)\\
				=& \text{Pr}\left(  ||\mathbf{\wt{y}}^{(s)}||_2^2 >\frac{1}{\tau^2}\cdot \frac12 \cdot C^2\cdot q^6\cdot \frac{\log n}{\epsilon^2}\cdot \mathbf{\wt{f}}(uv)^2\right),\\
		\end{split}
		\end{equation}
		where we let $\mathbf{\wt{y}}:=\frac{\mathbf{\wt{f}}}{\tau}$ and $\mathbf{\wt{y}}^{(s)}:=\frac{\mathbf{\wt{f}}^{(s)}}{\tau}$ to simplify notation in the last line and used the fact that $\mathbf{\wt{f}}^{(s)}(uv)^2=\mathbf{\wt{f}}(uv)^2$ conditioned on $(u, v)\in B_s$ in going from line~2 to line~3. Noting that $|\mathbf{\wt{f}}(uv)|\geq \frac{1}{3}\cdot \frac{\beta}{2^{s+1}}$ by~\eqref{eq:2i3gbibg} and $\tau\leq 5q^3\cdot \frac{\beta}{2^s}$ by~\eqref{eq:f2}, we get that the last line in~\eqref{eq:932hg932hg} is upper bounded by 
		\begin{equation}\label{eq:239hg0932g}
		\text{Pr}\left(\frac{ \mathbf{\wt{f}}^{(s)}(uv)^2}{||\mathbf{\wt{f}}^{(s)}||_2^2}< \frac{1}{C^2\cdot q^6}\cdot \frac{\epsilon^2}{ \log n}\Big| (u,v)\in B_s \right)\leq \text{Pr}\left(  ||\mathbf{\wt{y}}^{(s)}||_2^2 >\frac{C'\cdot \log n}{\epsilon^2}\right),
		\end{equation}
where $C'$ is a constant that can be made arbitrarily large by increasing $C$.
			On the other hand, we have the following
		\begin{align*}
		\mathbf{E}\left( ||\mathbf{\wt{y}}^{(s)}||_2^2\right) &=\frac{1}{2^s}\cdot\frac{||\mathbf{\wt{f}}||_2^2}{\tau^2} &&\text{Since $\mathbf{\wt{f}}^{(s)}$ is obtained by sampling at rate }\frac{1}{2^s} \\
		&\le \frac{1}{2^s}\cdot \frac{3}{\tau}&&\text{By \eqref{eq:f2}}\\
		&\le \frac{1}{2^s}\cdot\frac{6}{R_{uv}^{\wt{K}}}&& \text{By \eqref{eq:i23g93hg} and Fact~\ref{fact:maxphi2}} \\
		&\le \frac{1}{2^s}\cdot \frac{36}{5 \cdot ||M\mathbf{b}_{uv}||_2^2} &&\text{By assumption {\bf (B)} of the lemma}\\
		&\le \frac{1}{2^s}\cdot \frac{36}{5 \cdot \frac{4}{5\cdot c_2}\cdot \frac{1}{2^{s+1}}\cdot \frac{\epsilon^2}{\log n}}&&\text{By condition {\bf (1)} of the lemma}\\
		&= \frac{18\cdot c_2\cdot \log n}{\epsilon^2},
		\end{align*}
		where the transition from line~2 to line~3 is justified by noting that 
		$$
		\tau\geq \left|(\chi_{u}-\chi_{v})^\T \wt{K}^{+} (\chi_{x}-\chi_{v})\right|\geq \frac1{2}(\chi_{u}-\chi_{v})^\T \wt{K}^{+} (\chi_{u}-\chi_{v})=\frac1{2}R_{uv}^{\wt{K}}
		$$
		by Fact~\ref{fact:maxphi2} and choice of $x$.
		
		We now upper bound the right hand side of~\eqref{eq:239hg0932g}. For every entry $(a, b)$ in $\mathbf{\wt{y}}$, using Fact~\ref{fact:maxphi2} one has 
		$$
		\left|\mathbf{\wt{y}}_{ab}\right|=\frac{|(\chi_{a}-\chi_{b})^\T \wt{K}^{+} (\chi_{x}-\chi_{v})|}{\tau}\leq \frac{|(\chi_{x}-\chi_{v})^\T \wt{K}^{+} (\chi_{x}-\chi_{v})|}{\tau}= 1
		$$
		
		Thus, every entry 
		 is in $[-1,1]$, and since every entry is sampled independently, so we can use standard Chernoff/Hoeffding \cite{doi:10.1080/01621459.1963.10500830} bound and we get 
		\begin{align*}
		\text{Pr}\left(  \left|\left|\mathbf{\wt y}^{(s)}\right|\right|_2^2 >\frac{C'\cdot \log n}{\epsilon^2}\right)\leq n^{-10}
		\end{align*}
		as long as $C'$ is a sufficiently large absolute constant (which can be achieved by making the constant $C$ sufficiently large). Hence, we get from~\eqref{eq:932hg932hg} that with high probability over the sampling of entries in $B_s$ 
		\begin{align*}
		\frac{|\wt{\mathbf{f}}^{(s)}(uv)|}{||\wt{\mathbf{f}}^{(s)}||_2}\geq \frac{1}{C\cdot q^3}\cdot \sqrt{\frac{\epsilon^2}{ \log n}}.
		\end{align*}
		
		We set $\eta=\frac{1}{2} \cdot \frac{1}{C\cdot q^3}\cdot \sqrt{\frac{\epsilon^2}{ \log n}}$, thus if $|\wt{\mathbf{f}}^{(s)}(uv) |\geq 2\eta ||\wt{\mathbf{f}}^{(s)}||_2$ our sparse recovery sketch must return $uv$ with high probability, by Lemma~\ref{lem:HH}.
\end{proofof}

\begin{theorem}\label{lem:corr-main-HE} (Correctness of Algorithm~\ref{alg:main-sparsify})
	Algorithm \textsc{Sparsify$(\Pi^{\le \ell} B,\ell,\epsilon)$}, for $\ell=d+1=\lceil\log_2 \frac{\lambda_u}{\lambda_\ell}\rceil+1$ (see Lemma~\ref{lem:chain_coarse}), any $\epsilon\in (0,1/5)$ and sketches $\Pi^{\le \ell} B$ of graph $G$ as described in Section~\ref{sec:sketches}, returns a graph $H$ with $O(n\cdot\polylog n\cdot\e^{-2})$ weighted edges, with Laplacian matrix $L_H$, such that $$L_H\approx_\e L_G,$$ with high probability.
\end{theorem}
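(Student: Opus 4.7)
The plan is to proceed by induction on $\ell$ from $0$ up to $d+1$, with the inductive invariant that \textsc{Sparsify}$(\Pi^{\le\ell}B,\ell,\epsilon)$ outputs a matrix $K_\ell$ satisfying $K_\ell \approx_\epsilon L_\ell$ in the row-span sense ($\preceq_r$) with high probability and having $O(n\polylog n\cdot\epsilon^{-2})$ weighted edges. The claim at $\ell = d+1$, where $\gamma = 0$ and $L_{d+1} = L_G$, is exactly the theorem.

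For the base case $\ell = 0$, the algorithm sets $\wt K = \lambda_u I$, and Fact~\ref{fact:lambda} gives $\lambda_u I \preceq L_0 \preceq 2\lambda_u I$, so $\tfrac{1}{3}L_0 \preceq \wt K \preceq L_0$, establishing assumption \textbf{(A)} of Lemma~\ref{lm:main}. For the inductive step, the recursive call returns $K_{\ell-1} \approx_\epsilon L_{\ell-1}$ by hypothesis, and the scaling $\wt K \gets \tfrac{1}{2(1+\epsilon)}K_{\ell-1}$ in line~\ref{line:HE-CallSp} is tuned so that, combined with the identity $L_{\ell-1} = L_\ell + \tfrac{\lambda_u}{2^\ell}I$ and the bound $\tfrac{\lambda_u}{2^\ell}I \preceq_r L_\ell$ (which gives $L_\ell \preceq_r L_{\ell-1} \preceq_r 2L_\ell$), one obtains $\wt K \preceq_r \tfrac{1}{2}L_{\ell-1} \preceq_r L_\ell$ and, for $\epsilon \le 1/5$, $\wt K \succeq_r \tfrac{1-\epsilon}{2(1+\epsilon)}L_\ell \ge_r \tfrac{1}{3}L_\ell$. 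Thus \textbf{(A)} is re-established at level $\ell$ with no degradation of constants across the $O(\log n)$ recursive levels.

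Assumption \textbf{(B)} of Lemma~\ref{lm:main} asserts a $(1\pm 1/5)$ approximate preservation of the effective-resistance metric of $\wt K$ by $M$; since $\|\wt W^{1/2}\wt B \wt K^+(\chi_u-\chi_v)\|_2^2$ encodes $R^{\wt K}_{uv}$ as a squared Euclidean norm, this follows from a standard Johnson--Lindenstrauss argument using the $\pm 1$ sketch $Q$ with $q = 1000\log n$ dimensions and a union bound over the $\binom{n}{2}$ pairs. Conditioning on \textbf{(A)} and \textbf{(B)} (each a high-probability event), Lemma~\ref{lm:main} implies that for every $s$ the set $E_s$ returned by \textsc{RecoverEdges} is exactly the set of edges $e \in E$ with $p_e' := \tfrac{5}{4}c_2\|M b_e\|_2^2 \log n/\epsilon^2 \in (2^{-s-1},2^{-s}]$ that survive the independent subsample $B^\ell_{s^+}$. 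Aggregating over $s$, each edge $e$ of $G$ appears in the final output with weight $2^{-s^+}$ and inclusion probability $\min(1,2^{-s^+})$, where $s$ is the unique level containing $p_e'$; this is precisely independent reweighted sampling at a rate $p_e \in [p_e', 2p_e']$ (or $p_e = 1$).

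Because $\wt K \preceq_r L_\ell$ by \textbf{(A)} implies $R^{\wt K}_e \ge R^{L_\ell}_e$, combining with \textbf{(B)} yields $p_e' \ge c_2 R^{L_\ell}_e \log n/\epsilon^2$, so the hypothesis of the classical effective-resistance sparsification result (Theorem~\ref{lem:classic_result}) is met; its conclusion gives $B_n^\top W B_n \approx_\epsilon L_\ell - \gamma I$, and adding back $\gamma I$ closes the induction. The size bound follows from $\mathbb E[|E'|] \le 2\sum_e p_e' = O(n\log n/\epsilon^2)$, using $\sum_e R^{L_\ell}_e \le n$ and the constant-factor slack $R^{\wt K}_e \le 3R^{L_\ell}_e$ from $\wt K \succeq_r \tfrac{1}{3}L_\ell$; a union bound over the $O(\log n)$ recursive levels absorbs the overall failure probability. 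The principal obstacle is maintaining the sharp constants in \textbf{(A)} across the recursion---this is exactly why the scaling $1/(2(1+\epsilon))$ and the restriction $\epsilon < 1/5$ appear, ensuring the accuracy does not compound across levels; a secondary technicality is verifying that the sampling-level range $s \in [-\log(3c_2\log n/\epsilon^2),10\log n]$ in line~\ref{line:s-for} captures every edge of non-negligible effective resistance, with edges outside this range contributing only an inverse-polynomial tail.
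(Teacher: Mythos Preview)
Your proposal is correct and follows essentially the same approach as the paper: induction on $\ell$ establishing condition \textbf{(A)} of Lemma~\ref{lm:main} via the chain relations and the $\tfrac{1}{2(1+\epsilon)}$ rescaling, Johnson--Lindenstrauss for \textbf{(B)}, then invoking Lemma~\ref{lm:main} together with Theorem~\ref{lem:classic_result} to close the induction and bound the edge count. One minor slip: the weight assigned to a recovered edge should be $2^{s^+}$ (the reciprocal of the sampling rate), not $2^{-s^+}$---the paper's pseudocode carries the same typo, though its proof text uses $2^{s_e^+}$.
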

\begin{proof}

	Let $\gamma=\lambda_u/2^\ell$. As the algorithm only makes recursive calls with lower values of $\ell$, we proceed by induction on $\ell$.
\paragraph{Inductive hypothesis:} A call of \textsc{Sparsify$(\Pi^{\le\ell}B,\ell,\e)$} returns a graph $H^\ell$ with $O(n\cdot\polylog n\cdot\e^{-2})$ weighted edges, with Laplacian matrix $L_{H^\ell}$, such that $$L_{H^\ell}\approx_\e L_{\ell}$$ with high probability, where
$$L_{\ell}=\begin{cases}
L_G+\frac{\lambda_u}{2^\ell}I & \text{if $0\le\ell\le d$}\\
L_G & \text{if $\ell=d+1.$}
\end{cases}$$
and for all $\ell\geq 0$ the matrix $\wt{K}$ defined at the beginning of Algorithm~\ref{alg:main-sparsify} is a $3$-spectral sparsifier of $G^{\gamma(\ell)}$. 
		\paragraph{Base case: $\ell=0$.}  In this case we set $\wt{K}=\lambda_u I$ (see line \ref{line:leaf-case-K} of Algorithm \ref{alg:main-sparsify}). By Lemma~\ref{lem:chain_coarse} we have  
	\begin{align}\label{eq:l0case}
	\frac{1}{2}\cdot
	L_\ell \preceq \wt{K}  \preceq L_\ell,
	\end{align}
i.e. $\wt{K}$ is a factor $3$ spectral approximation of $L_\ell$ for $\ell=0$. We argue that the graph output by Algorithm~\ref{alg:main-sparsify} satisfies $L_{H^\ell}\approx_\e L_{\ell}$ below, together with the same argument for the inductive step.

	\paragraph{Inductive step: $\ell-1\to \ell$.} 	As per line~\ref{line:HE-CallSp} of Algorithm~\ref{alg:main-sparsify} we set $\wt{K}=\frac{1}{2(1+\e)}\textsc{Sparsify}(\Pi^{\le\ell-1}B,\ell-1, \e)$, therefore the corresponding Laplacian for this call is $L_{\ell-1}$. By the inductive hypothesis \textsc{Sparsify}$(\Pi^{\le\ell-1}B,\ell-1, \e)$ returns an $\e$-sparsifier of $L_{\ell-1}$, so we have 
	\begin{equation}
	\label{eq:mn_ep-ap}
	(1-\e) \cdot L_{\ell-1} \preceq_r 2(1+\e)\wt{K} \preceq_r (1+\e) \cdot L_{\ell-1}\text{.}
	\end{equation}
	Moreover, by Lemma~\ref{lem:chain_coarse}, we have 
	\begin{equation}
	\label{eq:mn_Gamma-ap}
	\frac{1}{2} \cdot L_\ell \preceq \frac{1}{2} \cdot L_{\ell-1} \preceq L_\ell \text{.}
	\end{equation}
	Putting \eqref{eq:mn_ep-ap} and \eqref{eq:mn_Gamma-ap} together we get
	\begin{equation}\label{eq:LK}
	\frac{1-\e}{2(1+\e)} \cdot L_\ell\preceq_r  \wt{K} \preceq_r  L_\ell,
	\end{equation}
	which implies for $\e\le1/5$ that
	\begin{align}\label{eq:lge0case}
	\frac13 \cdot L_\ell \preceq_r \wt{K} \preceq_r  L_\ell \text{.}
	\end{align}

We thus have that for all values of $\ell$ the matrix $\wt{K}$ defined at the beginning of Algorithm~\ref{alg:main-sparsify} is a $3$-spectral sparsifier of $G^{\gamma(\ell)}$, assuming the inductive hypothesis for $\ell-1$ (except for the base case case, where no inductive hypothesis is needed).   Consequently, for any pair of vertices $(u,v)$ in the same connected component in $L$, 
	\begin{align}\label{eq:effCineq}
	b_{uv}^\top L_\ell^+ b_{uv} \le  b_{uv}^\top \wt{K}^+ b_{uv} \le 3\cdot b_{uv}^\top L_\ell^+ b_{uv}
	\end{align}

	For the rest of the proof, we let $L:=L_\ell$ for simplicity. We now show that the rest of the algorithm constructs an $\e$-sparsifier for $L$ by sampling each edge $e$ with some probability at least $\min\{1,R^{L}_{uv}\log(n)/\e^2\}$ and giving it weight inverse proportional to the probability. This will indeed give us an $\e$-sparsifier due to Theorem~\ref{lem:classic_result}. In particular, this probability will be the following: For edges $e$ in the appended complete graph $\gamma I$ the probability is $1$. For an edge $e$ in the original graph $G$ we define the variable $p_e'$, as in line~\ref{def:p_e} of Algorithm~\ref{alg:main-heavy-edge}, to be $\frac54\cdot c_2\cdot||Mb_e||_2^2\cdot\log(n)/\e^2$, and we define $p_e$ to be $\min\{1,p_e'\}$. Let $s_e$ be the integer such that $p_e'\in(2^{-s_e-1}, 2^{-s_e}]$. Note that then $p_e\in(2^{-s_e^+-1},2^{-s_e^+}]$. Our probability for sampling an edge $e$ of the original graph will be $2^{-s_e^+}$, which is less than $\min\{1,c_2\cdot R^{L}_{e}\log(n)/\e^2\}$, as required by Theorem~\ref{lem:classic_result}.

Consider the conditions of Lemma~\ref{lm:main}.

\begin{enumerate}
	\item  $\frac13 \cdot L_\ell \preceq_r \wt{K} \preceq_r  L_\ell$ is satisfied as shown above.
	\item Note that $R^{\wt K}_{uv}=||\wt{W}^{1/2} \wt B\wt K^+b_u-\wt{W}^{1/2} \wt B\wt K^+b_v||^2_2$ so we can use the Johnson-Lindenstrauss lemma to approximate $R^{\wt K}_{uv}$ using a smaller matrix. In lines~\ref{line:setq} and \ref{line:setQ} we use the exact construction of Lemma~\ref{lem:JL} with $q$ being large enough for parameters $\e=1/5$ and $\beta=6$. Therefore,  $R_{uv}^{\wt{K}}\le\frac{5}{4}||M(\chi_u-\chi_v)||_2^2\le \frac{3}{2} R_{uv}^{\wt{K}}$ is satisfied with high probability, by Lemma~\ref{lem:JL}.
\end{enumerate}

Thus by Lemma~\ref{lm:main} if edge $e$ is sampled in $B^\ell_{s_e^+}$ then  \textsc{RecoverEdges($\Pi^\ell_{s_e^+}B^\ell_{s_e^+},M,\wt K^+,s_e,q,\e$)} will recover $e$ with high probability in line~\ref{line:recover} of Algorithm~\ref{alg:main-sparsify}. It will then be given the required weight ($2^{s_e^+}$). Note that $e$ will not be recovered in any other call of \textsc{RecoverEdges}, that is when $s\neq s_e$. Note also, that $2^{-s_e^+}$ is indeed an upper bound on $\min\{1,c_2\cdot R^{L}_{e}\cdot \log(n)/\e^2\}$, and within constant factor of it. Therefore, by Theorem~\ref{lem:classic_result}, the resulting graph will be an $\e$-spectral sparsifier of $G^\gamma$, and it will be $O(n\cdot \polylog(n)/\e^2)$-sparse (disregarding the regularization).
\end{proof}

\subsection{Maintenance of sketches}\label{sec:sketches}

Note that Algorithm \ref{alg:main-heavy-edge} takes sketch $\Pi B$ as input. More precisely, $\Pi$ is a concatenation of \textsc{HeavyHitter} sketch matrices composed with sampling matrices, indexed by sampling rate $s$ and regularization level $\ell$. In particular, for all $s$ and $\ell$ let $B^\ell_s$ be a row-sampled version of $B$ at rate $2^{-s}$. Then $\Pi^\ell_s$ is a \textsc{HeavyHitter} sketch drawn from the distribution from Lemma~\ref{lem:HH} with parameter $\eta=\frac{1}{2} \cdot \frac{1}{C\cdot q^3}\cdot \sqrt{\frac{\epsilon^2}{ \log n}}$. Note that the matrices $\left(\Pi^\ell_s\right)_{s,\ell}$ are independent and identically distributed. We then maintain $\Pi^\ell_s B^\ell_s$ for all $s$ and $\ell$. We define 
$$\Pi^\ell B=\Pi^\ell_0 B^\ell_0\oplus\ldots\oplus\Pi^\ell_{10\log n}B^\ell_{10\log n},
$$
where $\oplus$ denotes concatenation of rows. We let $\Pi^{\le \ell}$ denote $\Pi^{0}\oplus\ldots\oplus\Pi^{\ell}$, and let  $\Pi$ denote $\Pi^{\le d+1}$ to simplify notation. Thus, the algorithm maintains $\Pi B$ throughout the stream. We maintain $\Pi B$ by maintaining each $\Pi^\ell_s B^\ell_s$ individually. To this end we have for each $s$ and $\ell$ an independent hash function $h^\ell_s$ mapping ${V\choose 2}$ to $\{0,1\}$ independently such that $\mathbb P(h^\ell_s(u,v)=1)=2^{-s}$. Then when an edge insertion or deletion, $\pm(u,v)$, arrives in the stream, we update $\Pi^\ell_s B^\ell_s$ by $\pm\Pi^\ell_s\cdot b_{uv}\cdot h^\ell_s(u,v)$.

Overall, the number of random bits needed for all the matrices in an invocation of Algorithm~\ref{alg:main-heavy-edge} is at most $R=\wt{O}(n^2)$, in addition to the random bits needed for the recursive calls. To generate matrix $\Pi$ we use the fast pseudo random numbers generator from Theorem~\ref{thm:prg} below:

\begin{theorem}\cite{KMMMN19}
	\label{thm:prg}
	For any constants $q,c > 0$, there is an explicit pseudo-random generator (PRG) that draws on a seed of $O(S\plog(S))$ random bits and can simulate any randomized algorithm running in space $S$ and using $R = O(S^q)$ random bits. This PRG can output any pseudorandom bit in $O(\log^{O(q)} S)$ time and the simulated algorithm fails with probability at most $S^{-c}$ higher than the original.
\end{theorem}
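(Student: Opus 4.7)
The plan is to build the PRG on top of Nisan's classical construction (Nisan, 1992), which stretches a seed of $O(S \log(R/S))$ bits into $R$ bits indistinguishable from uniform by any space-$S$ machine, with statistical distance $2^{-\Omega(S)}$. Plugging in $R = O(S^q)$ for constant $q$ gives seed length $O(qS \log S) = O(S \plog(S))$, and choosing the hidden constants large enough makes the fooling error smaller than $S^{-c}$, which upper bounds the extra failure probability incurred by the simulation. So the only nontrivial content is giving a fast random-access implementation.

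The structural backbone of Nisan's generator is a complete binary tree of depth $t = \lceil \log R \rceil$ whose leaves are the $R$ output blocks. Each level is associated with an independently sampled hash function $h_i$ from a pairwise-independent family on $\{0,1\}^S$, and the block indexed by a leaf $\ell$ with binary path $b_1 b_2 \cdots b_t$ equals $h_{i_k} \circ \cdots \circ h_{i_1}(s_0)$ where one applies $h_i$ or uses its seed according to $b_i$. Thus computing any single output bit reduces to evaluating a composition of $t = O(q \log S)$ hash functions on an $S$-bit state, plus selecting one bit of the resulting block.

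The plan for the $\log^{O(q)} S$ access time is to instantiate the pairwise-independent family so that each hash evaluation runs in $\plog(S)$ time rather than $\poly(S)$. One can do this by partitioning the $S$-bit state into $S/\plog(S)$ chunks of size $\plog(S)$ and applying a pairwise-independent hash on each chunk independently (using the structural fact that Nisan's analysis only requires pairwise independence acting on the current read/write head window of the simulated space-$S$ machine, whose active portion at any step has size $O(\log S)$ for the streaming algorithms we simulate). With this implementation each $h_i$ costs $\plog(S)$ time, and the total per-bit cost is $O(t) \cdot \plog(S) = O(q \log S \cdot \plog S) = O(\log^{O(q)} S)$.

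The main obstacle is justifying the last step: showing that the "chunked" or otherwise sparsified hash family still satisfies the pseudorandomness guarantee Nisan's analysis requires against the specific class of space-$S$ algorithms we care about. The plan is to appeal to the explicit construction and analysis given in~\cite{KMMMN19}, where exactly this trade-off between seed evaluation time and the fooling guarantee is carried out for the oblivious linear-sketching setting (which is the only setting in which we invoke Theorem~\ref{thm:prg}); the remainder of the argument — seed length, fooling error, and total time — then follows routinely from Nisan's theorem applied with the stated parameters.
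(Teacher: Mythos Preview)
The paper does not prove this theorem; it is quoted from \cite{KMMMN19} and used as a black box, so there is no ``paper's proof'' to compare against.

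On your attempt: the seed-length and error calculations are fine and follow immediately from Nisan's generator. The gap is entirely in the random-access time. In Nisan's construction each of the $t=O(q\log S)$ hash functions is a pairwise-independent map $h_i:\{0,1\}^S\to\{0,1\}^S$, and for the standard families (affine over $GF(2^S)$, Toeplitz, etc.) even a single output bit of $h_i(x)$ depends on all $S$ input bits, so one evaluation already costs $\Omega(S)$, not $\plog(S)$. Your proposed fix---chunking the state into $\plog(S)$-size pieces and hashing each independently---breaks exactly the property Nisan's hybrid argument uses: the analysis tracks the full $S$-bit configuration of the simulated branching program, and the hash must be pairwise independent as a function of that entire state, not of a small window. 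The assertion that ``the active portion at any step has size $O(\log S)$'' is not true for a general space-$S$ machine (its state is $S$ bits by definition), and you give no argument that it holds for the specific sketching algorithms in this paper.

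Your final paragraph then defers the construction back to \cite{KMMMN19}, which is circular since that is where the theorem comes from. To actually establish the $\log^{O(q)}S$ per-bit access you would need to exhibit a concrete PRG with local evaluation (for instance, a tree of small-seed extractors or a limited-independence construction tailored to the sketch-maintenance computation) together with a proof that it fools the relevant tester; your proposal supplies neither.
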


 Observe that the space used by Algorithm~\ref{alg:main-heavy-edge} is $s=\wt{O}(n)$  in addition to the space used by the recursive calls. Since $R=O(n^2)$, we have $R=O(s^2)$. Therefore, by Theorem~\ref{thm:prg} we can generate seed of $O(s\cdot \poly(\log s))$ random bits in $O(s\cdot \poly(\log s))$ time that can simulate our randomized algorithm.

Also, note that the random matrix $Q \in\mathbb{R} ^{\Theta(\log n)\times {n \choose 2}}$ for JL (line~\ref{line:setQ} of Algorithm \ref{alg:main-heavy-edge}) can be generated using $O(\log n)$-wise independent hash functions.

\subsection{Proof of Theorem~\ref{thm:main}}\label{sec:proof}

\begin{proofof}{Theorem~\ref{thm:main}}

Correctness of Algorithm~\ref{alg:main-heavy-edge} is proved in Theorem~\ref{lem:corr-main-HE}. It remains to prove space and runtime bounds.
	
\paragraph{Run-time and space analysis.} We will prove that one call of \textsc{Sparsify} in Algorithm~\ref{alg:main-sparsify} requires $\wt{O}(n)$ time and space, discounting the recursive call, where $n$ is the size of the vertex set of the input graph. Consider first lines~\ref{line:setQ} and \ref{line:setM}, and note that the random matrix $Q \in\mathbb{R} ^{\Theta(\log n)\times {n \choose 2}}$ for JL (line~\ref{line:setQ} of Algorithm \ref{alg:main-heavy-edge}) can be generated using $O(\log n)$-wise independent hash functions, resulting in $\poly(\log n)$  time to generate an entry of $Q$ and $O(\log n)$ space. We then multiply $Q\wt{W}^{1/2}\wt B$ by $\wt K^+$ which amounts to solving $\Theta(\log n)$ Laplacian systems and can be done in $O(n\polylog n\cdot\e^{-2})$ time, since $\wt K$ is $O(n\polylog(n)\cdot\e^{-2})$ sparse, using any of a variety of algorithms in the long line of improvements in solving Laplacian systems \cite{SpielmanT04,KoutisMP10,KoutisMP11,KelnerOSZ13,LeeS13,PengS14,CohenKMPPRX14,KoutisLP16,KyngLPSS16,KyngS16}. The resulting matrix, $M$, is again $\Theta(\log n\times n)$ and can be stored in $n\polylog n$ space. We note that the aforementioned Laplacian solvers provide approximate solutions with inverse polynomial precision, which is sufficient for application of the \textsc{HeavyHitter} sketch.

The for loops in both line~\ref{line:s-for} and line~\ref{line:j-for} iterate over only $\Theta(\log n)$ values. For all non-empty cells we iterate over all vertices in that cell, so overall, we iterate $n$ times. The \textsc{HeavyHitters} subroutine called with parameter $\eta=\e/\polylog n$ returns by definition at most $\polylog n/\e^2$ elements, so the for loop in line~\ref{line:e-for} is over $\polylog n/\e^2$ iterations. In total this is $O( n\polylog n \cdot \e^{-2})$ time and space as claimed.

To get an $\e$-sparsifier of the input graph $G$, we need only to run \textsc{Sparsify}$(\Pi^{\le d+1}B, d+1,\e)$. Therefore chain of recursive calls will be $\Theta(\log(n))$ long, and the total run time will still be $\wt{O}(n\e^{-2})$.
\end{proofof}

\newcommand{\etalchar}[1]{$^{#1}$}

\appendix
\section{Supplementary material}
\label{app:algo}
We will need Lemma \ref{lem:chain_coarse} that we use in the correctness proof of our algorithm. 

\begin{lemma}[Chain of Coarse Sparsifiers \cite{LiMP13, kapralov2017single}]
	\label{lem:chain_coarse}
	Consider any PSD matrix $K$ with
	maximum eigenvalue bounded from above by $\lambda_u=2n$ and minimum nonzero eigenvalue bounded from below by $\lambda_\ell=\frac{1}{8n^2}$. Let $d = \ceil{\log_2 \frac{\lambda_u}{\lambda_\ell}}$. For $\ell \in \{0, 1, 2, \ldots, d\}$, define: $$\gamma(\ell)=\frac{\lambda_u}{2^\ell}.$$ So $\gamma(d)\leq \lambda_\ell$, and $\gamma(0)=\lambda_u$. Then the chain of $PSD$ matrices, $[K_0, K_1, \ldots, K_d]$ with $K_\ell=K+\gamma(\ell)I$ satisfies the following relations:
	\begin{enumerate}
		\item $K \preceq_r K_d \preceq_r 2 \cdot K, $
		\label{itm:last}
		\item $K_\ell \preceq K_{\ell-1} \preceq 2 \cdot K_\ell $ for all $\ell \in \{1,\ldots, d\},$\label{itm:mid}
		\item $K_0 \preceq 2\cdot \gamma(0) \cdot I \preceq 2 \cdot K_0.$ \label{itm:base}
	\end{enumerate}
\end{lemma}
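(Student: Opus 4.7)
The plan is to verify each of the three relations directly from the definition $K_\ell = K + \gamma(\ell) I$, exploiting the telescoping identity $\gamma(\ell-1) = 2\gamma(\ell)$ together with the spectral bounds $\lambda_\ell I \preceq_r K \preceq \lambda_u I$. All three parts reduce to elementary PSD manipulations; no deep tool is needed.

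For relation (2), I would first note that $K_{\ell-1} - K_\ell = (\gamma(\ell-1)-\gamma(\ell))I = \gamma(\ell) I \succeq 0$, which immediately gives $K_\ell \preceq K_{\ell-1}$. For the reverse inequality I would compute
\[
2 K_\ell - K_{\ell-1} \;=\; K + \bigl(2\gamma(\ell) - \gamma(\ell-1)\bigr) I \;=\; K \;\succeq\; 0,
\]
where the middle equality uses $\gamma(\ell-1) = 2\gamma(\ell)$. Relation (3) is similarly a one-liner: $K_0 = K + \gamma(0) I \preceq \gamma(0) I + \gamma(0) I = 2\gamma(0) I$ follows from the upper spectral bound $K \preceq \lambda_u I = \gamma(0) I$, and $2\gamma(0) I \preceq 2 K_0$ follows from $K \succeq 0$.

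For relation (1), the lower bound $K \preceq_r K_d$ is trivial since $K_d - K = \gamma(d) I$ is PSD (this in fact holds in the full PSD order). The upper bound $K_d \preceq_r 2K$ is where the restriction to the row span matters: it is equivalent to showing $\gamma(d)\, x^\top x \le x^\top K x$ for every $x$ orthogonal to $\ker(K)$. I would prove this by expanding $x$ in an orthonormal eigenbasis of $K$, keeping only the nonzero-eigenvalue components, and using that each such eigenvalue is at least $\lambda_\ell$, while $\gamma(d) = \lambda_u / 2^d \le \lambda_\ell$ by the choice $d = \lceil \log_2 (\lambda_u/\lambda_\ell)\rceil$.

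The only step that requires any thought is the restricted ordering in (1), and even there the obstacle is merely bookkeeping: one must remember to pass to the row span in order to invoke the lower eigenvalue bound. The rest of the lemma is algebraic and follows from the doubling structure of $\gamma(\ell)$.
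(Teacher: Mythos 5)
Your proof is correct, and it handles the one nontrivial point — that the upper bound in relation (1) needs the restriction $\preceq_r$ to the row span of $K$ so that the minimum \emph{nonzero} eigenvalue bound $\lambda_\ell$ can be invoked — exactly right. The paper itself does not reprove this lemma (it cites \cite{LiMP13, kapralov2017single}); your argument matches the standard derivation in those references, reducing each relation to the doubling identity $\gamma(\ell-1)=2\gamma(\ell)$ together with $0 \preceq K \preceq \lambda_u I$ and $\lambda_\ell I \preceq_r K$.
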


We will need Theorem~\ref{lem:classic_result} that we use in the proof of correctness of the main algorithm. It is well known that by sampling the edges  of $B$ according to their effective resistance, it is possible to obtain a weighted edge vertex incident matrix $\wt{B}$ such that $(1-\e)B^\top B \preceq \wt{B}^\top\wt{B} \preceq (1+\e)B^\top B$ with high probability (see Lemma \ref{lem:classic_result}).

\begin{theorem}[Spectral Approximation via Effective Resistance Sampling \cite{spielman2011graph}] 
	\label{lem:classic_result}
	Let $B\in \R^{{{n}\choose{2}}\times n}$, $K=B^\top B$, and let $\wt{\tau}$ be a vector of leverage
	score overestimates for $B$'s rows, i.e. $\wt{\tau}_y \geq  \mathbf{b}_y^\top K^{+} \mathbf{b}_y$ for all $y \in [m]$. For $\e \in (0,1)$ and fixed constant $c$, define the sampling probability for row $\mathbf{b}_y$ to be $p_y = \min\{1, c\cdot \e^{-2}\log n \cdot\wt{\tau}_y \}$. Define a diagonal sampling matrix $W$ with $W(y,y) = \frac{1}{p_y}$ with probability $p_y$ and $W(y,y) = 0$ otherwise. With high probability,
	$\wt{K}=B^\top WB\ape K$. Furthermore $W$ has $O(||\wt{\tau}||_1\cdot \e^{-2}\log n )$ non-zeros with high probability.
\end{theorem}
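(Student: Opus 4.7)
This is the classical Spielman--Srivastava theorem, and my proof plan is standard matrix-concentration for rank-one updates. Write $\wt{K} = \sum_y W(y,y)\, \mathbf{b}_y \mathbf{b}_y^\top$ and conjugate by $K^{+/2}$: let $v_y \eqdef K^{+/2} \mathbf{b}_y$ and $Y_y \eqdef W(y,y)\, v_y v_y^\top$. Then $\mathbb{E}\!\left[\sum_y Y_y\right] = \sum_y v_y v_y^\top = K^{+/2} K K^{+/2} = \Pi_K$, the orthogonal projection onto $\mathrm{range}(K)$. The guarantee $(1-\e) K \preceq_r \wt{K} \preceq_r (1+\e) K$ is equivalent to $\left\|\sum_y Y_y - \Pi_K\right\|_{\mathrm{op}} \le \e$, since both sides are supported on $\mathrm{range}(K)$.

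First I would bound the operator norm of each summand. Rows with $p_y = 1$ are deterministic and contribute nothing to the deviation. For rows with $p_y < 1$, the leverage-score overestimate $\mathbf{b}_y^\top K^+ \mathbf{b}_y \le \wt{\tau}_y$ together with $p_y = c\, \e^{-2} \log n \cdot \wt{\tau}_y$ yields
\[
\|Y_y\|_{\mathrm{op}} \le \frac{\|v_y\|_2^2}{p_y} = \frac{\mathbf{b}_y^\top K^+ \mathbf{b}_y}{p_y} \le \frac{\wt{\tau}_y}{p_y} = \frac{\e^2}{c \log n} \defeq R.
\]
So each $Y_y$ is a PSD rank-one matrix of operator norm at most $R$, acting on an $n$-dimensional space. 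I would then invoke Tropp's matrix Chernoff bound for sums of independent PSD summands bounded by $R$ whose mean has top eigenvalue $1$:
\[
\Pr\!\left[\left\|\sum_y Y_y - \Pi_K\right\|_{\mathrm{op}} > \e\right] \le 2n \exp\!\left(-\tfrac{\e^2}{3R}\right) = 2n \exp(-\Omega(c \log n)),
\]
which is $\le n^{-10}$ for a sufficiently large absolute constant $c$. This gives the desired spectral approximation.

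For the sparsity bound, the number of non-zero entries of $W$ equals $\sum_y \mathbf{1}[W(y,y) \neq 0]$, a sum of independent Bernoullis with expectation $\sum_y p_y \le c\,\e^{-2} \log n \cdot \|\wt{\tau}\|_1$. A scalar Chernoff bound shows this count is within a constant factor of its mean with high probability, yielding the claimed $O(\|\wt{\tau}\|_1 \cdot \e^{-2} \log n)$ bound.

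The main technical obstacle is handling the rank-deficiency of $K$ cleanly: $\Pi_K$ is an orthogonal projection rather than the identity, so one must either restrict all operators to $\mathrm{range}(K)$ (where $\Pi_K$ becomes the identity on a space of dimension $\le n$ and a standard matrix Chernoff bound applies) or use a version of the inequality tailored to PSD matrices with rank-deficient mean. The exponent $\e^2/R = \Omega(c \log n)$ is exactly calibrated, via the $\log n$ factor in $p_y$, to make the failure probability $n^{-\Omega(1)}$.
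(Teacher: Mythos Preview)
The paper does not give its own proof of this statement: it is stated in the appendix as a known result, with a citation to \cite{spielman2011graph}, and used as a black box. Your plan is the standard matrix-concentration proof of the Spielman--Srivastava theorem and is correct; there is nothing to compare against here.
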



\label{subsec:hv-hit}
\begin{lemma}[$\ell_2$ Heavy Hitters]\label{lem:HH} For any $\eta > 0$, there is a decoding algorithm denoted by \textsc{HeavyHitter} and a distribution on matrices $S^h$ in $\R^{O(\eta^{-2} \polylog(N))\times N}$ such that, for any $x \in \R^N$, given $S^h x$, the algorithm $\textsc{HeavyHitter}(S^h x, \eta)$ returns a list $F\subseteq [N]$ such that $|F|=O(\eta^{-2} \polylog(N))$	with probability $1 - \frac{1}{\poly(N)}$ over the choice of $S^h$ one has 

\begin{description}
\item[(1)] for every $i\in [N]$ such that $|x_i|\geq \eta ||x||_2$ one has $i\in F$;
\item[(2)] for every $i\in F$ one has $|x_i|\geq (\eta/2) ||x||_2$.
\end{description}
The sketch $S^hx$ can be maintained and decoded in $O(\eta^{-2} \polylog(N))$ time and space.
\end{lemma}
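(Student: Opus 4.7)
The plan is to combine Count-Sketch estimators with a dyadic tree for fast support identification, following the standard template for $\ell_2$ heavy hitters in the turnstile sketching model. I would use three ingredients: (i) a basic Count-Sketch giving per-coordinate estimates accurate up to $(\eta/8)\|x\|_2$ with high probability, (ii) a dyadic decomposition that locates heavy coordinates without enumerating $[N]$, and (iii) a final filtering pass enforcing property (2).

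For ingredient (i), I would draw $R=\Theta(\log N)$ independent pairwise-independent hash functions $h_r:[N]\to [B]$ with $B=\Theta(\eta^{-2})$ and random sign functions $\sigma_r:[N]\to\{\pm 1\}$, and store the bucket sums $y_{r,b}=\sum_{i:h_r(i)=b}\sigma_r(i)x_i$. For any fixed $i$, the estimator $\sigma_r(i)y_{r,h_r(i)}$ is unbiased for $x_i$ with variance $\le \|x\|_2^2/B$; Chebyshev combined with a median-of-means argument across the $R$ repetitions yields an estimate $\widehat x_i$ within $(\eta/8)\|x\|_2$ of $x_i$ with probability $1-N^{-c}$ for any constant $c$ of our choosing, and a union bound makes this simultaneously true for all $i\in[N]$.

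For ingredient (ii), treat $[N]$ as the leaves of a binary tree of depth $\log_2 N$, and maintain an independent Count-Sketch at each level $j$ that estimates the $\ell_2^2$-mass of each of the $2^j$ prefixes. A standard AMS-style sketch stored alongside gives a $(1\pm 1/10)$ approximation of $\|x\|_2$. Starting from the root, descend the tree, refining only those prefixes whose estimated mass exceeds $\eta^2\|x\|_2^2/4$; at most $O(\eta^{-2})$ prefixes per level can cross this threshold, so the traversal visits $O(\eta^{-2}\log N)$ nodes and produces a candidate set of size $O(\eta^{-2})$ that is guaranteed (with high probability) to contain every $i$ with $|x_i|\ge \eta\|x\|_2$.

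For ingredient (iii), I evaluate the point estimate from (i) on every candidate and keep $i$ in $F$ iff $|\widehat x_i|\ge (3\eta/4)\|x\|_2$. By the accuracy guarantee of (i), any true heavy hitter with $|x_i|\ge \eta\|x\|_2$ survives (property (1)), and any $i$ with $|x_i|<(\eta/2)\|x\|_2$ is rejected (property (2)). The per-update cost is $O(\polylog N)$ (one bucket touched per level per repetition), the total space is $R\cdot B\cdot \log N = O(\eta^{-2}\polylog N)$ words, and decoding costs $O(\eta^{-2}\polylog N)$, matching the claimed bounds. The main obstacle is controlling error across the $O(\eta^{-2}\log N)$ prefix-mass queries performed during the traversal: a single bad estimate could cause the descent to miss a heavy coordinate or to explode into many light subtrees. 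I would handle this by choosing the per-level Count-Sketch parameters slightly larger, so that every prefix's $\ell_2^2$-mass is estimated up to additive error $\eta^2\|x\|_2^2/16$ with probability $1-N^{-c}$, and union-bounding over all $O(N)$ tree nodes.
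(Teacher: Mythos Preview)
The paper does not prove this lemma: it is stated in the appendix as a black-box tool from the sketching literature and is invoked without argument. So there is no ``paper's own proof'' to compare against; your proposal is supplying a proof where the paper simply cites the result.

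Your outline follows the standard Count-Sketch plus dyadic-tree template and is broadly correct. One point deserves tightening. In ingredient (ii) you say you will ``maintain an independent Count-Sketch at each level $j$ that estimates the $\ell_2^2$-mass of each of the $2^j$ prefixes.'' A Count-Sketch estimates individual coordinates of a vector, not $\ell_2^2$-masses of blocks, and the map $x\mapsto(\|x_{I_1}\|_2,\ldots,\|x_{I_{2^j}}\|_2)$ is not linear, so you cannot sketch it directly. The usual fix is, at level $j$, to first multiply $x$ coordinatewise by fresh random signs and then sum within each of the $2^j$ dyadic blocks; the resulting length-$2^j$ vector has, for block $k$, a coordinate whose square is an unbiased estimator of $\sum_{i\in I_k}x_i^2$, and a Count-Sketch on this derived vector (with $O(\log N)$ independent repetitions and a median) then gives the block-mass estimates you need with additive error $O(\eta^2\|x\|_2^2)$. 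With that correction, the rest of your argument (thresholded descent, $O(\eta^{-2})$ surviving nodes per level, final filtering via the per-coordinate estimator from ingredient (i), and a union bound over all $O(N)$ tree nodes) goes through and delivers the stated space, update, and decoding bounds.
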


\begin{lemma}[Binary Johnson-Lindenstrauss Lemma \cite{DBLP:journals/jcss/Achlioptas03}]{\label{lem:JL}}

Let $P$ be an arbitrary set of points in $\mathbb R^d$, represented by a $d\times n$ matrix $A$, such that the $j^\text{th}$ point is $A\chi_j$. Given $\e,\ \beta>0$ and
$$q\ge\frac{4+2\beta}{\epsilon^2/2-\epsilon^3/3}\log n.$$
Let $Q$ be a random $q\times d$ matrix $(q_{ij})_{ij}$ where $q_{ij}$'s are independent identically distributed variables taking $1$ and $-1$ each with probability $1/2$. Then, if $M=\frac1{\sqrt{q}}QA$, then with probability at least $1-n^{-\beta}$, for all $u,v\in[n]$
$$(1-\e)||A\chi_u-A\chi_v||^2_2\le||M\chi_u-M\chi_v||^2_2\le(1+\e)||A\chi_u-A\chi_v||^2_2$$

\end{lemma}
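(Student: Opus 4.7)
The plan is to follow the standard Johnson--Lindenstrauss strategy, the key twist being that the entries of $Q$ are Rademacher rather than Gaussian. First I would fix a pair of indices $u,v \in [n]$, set $x := A\chi_u - A\chi_v \in \R^d$, and by homogeneity assume $\|x\|_2 = 1$. Write $Y_i := (Qx)_i = \sum_{j=1}^d q_{ij} x_j$ for the $i^{\text{th}}$ coordinate of $Qx$, so $\|M\chi_u - M\chi_v\|_2^2 = \tfrac{1}{q}\sum_{i=1}^q Y_i^2$. The $Y_i$ are i.i.d.\ across $i\in[q]$ since the rows of $Q$ are independent, and $\E[Y_i^2] = \sum_j x_j^2 = 1$, so the mean is exactly $\|x\|_2^2$ as desired.

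The technical core is a tight MGF bound on $Y_i^2$. I would use the Gaussian linearization
\[
e^{\lambda y^2} = \E_{g\sim\mathcal N(0,1)}\!\bigl[e^{\sqrt{2\lambda}\,g y}\bigr] \quad (\lambda<1/2),
\]
to integrate out the discrete $q_{ij}$:
\begin{align*}
\E\bigl[e^{\lambda Y_i^2}\bigr]
&= \E_g \prod_{j=1}^d \E_{q_{ij}}\!\bigl[e^{\sqrt{2\lambda}\,g\,q_{ij}x_j}\bigr]
 = \E_g \prod_{j=1}^d \cosh\!\bigl(\sqrt{2\lambda}\,g x_j\bigr) \\
&\le \E_g\, e^{\lambda g^2 \sum_j x_j^2} = \E_g\, e^{\lambda g^2} = (1-2\lambda)^{-1/2},
\end{align*}
using $\cosh(t)\le e^{t^2/2}$. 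This matches the MGF one would obtain if the $Y_i$ were true standard Gaussians, so from this point on the analysis is identical to the classical Gaussian JL.

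Next I would apply Markov's inequality to $\exp\!\bigl(\lambda \sum_i Y_i^2\bigr)$ with $\lambda = \e/(2(1+\e))$ to bound the upper tail $\Pr\bigl[\sum_i Y_i^2 > (1+\e)q\bigr]$, and symmetrically (with negative $\lambda$ and the analogous bound $\E[e^{-|\lambda| Y_i^2}]\le(1+2|\lambda|)^{-1/2}$) to bound the lower tail. Standard algebra---exactly as in the Gaussian JL proof---yields a per-pair failure probability at most $2\exp\!\bigl(-q(\e^2/2-\e^3/3)/2\bigr)$. Union bounding over the $\binom{n}{2} < n^2$ pairs and plugging in $q\ge\frac{4+2\beta}{\e^2/2-\e^3/3}\log n$ drives the total failure probability below $n^{-\beta}$.

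The main obstacle is obtaining the Gaussian-like MGF above from discrete $\pm 1$ entries. A direct Taylor expansion of $Y_i^{2k} = \bigl(\sum_j q_{ij}x_j\bigr)^{2k}$ forces one to control high-order mixed moments of Rademachers, leading to strictly worse constants in $\e$; the Gaussian-integral linearization is precisely what lets Achlioptas's $\pm 1$ construction match the classical Gaussian JL bound, and once the MGF is in hand everything else is a routine Cram\'er-style Chernoff argument plus union bound.
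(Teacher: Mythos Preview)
The paper does not prove this lemma: it is stated in the appendix purely as a citation to Achlioptas~\cite{DBLP:journals/jcss/Achlioptas03}, with no accompanying argument. So there is no ``paper's proof'' to compare against; your sketch is essentially a compressed version of Achlioptas's original proof, and in that sense it is the right thing to write here.

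One remark on your sketch itself: the Gaussian-linearization identity $e^{\lambda y^2}=\E_g[e^{\sqrt{2\lambda}\,gy}]$ only makes sense for $\lambda>0$, so it handles the upper tail cleanly but does not literally give the ``analogous bound'' $\E[e^{-|\lambda|Y_i^2}]\le(1+2|\lambda|)^{-1/2}$ for the lower tail. That inequality is true, but it needs a separate justification---e.g.\ via the moment domination $\E[Y_i^{2k}]\le\E[g^{2k}]=(2k-1)!!$ (which your $\cosh(t)\le e^{t^2/2}$ step already implies) together with a convexity/majorization argument, or by the direct computation Achlioptas carries out. As written, the lower-tail step is asserted rather than derived; if you want the sketch to be self-contained you should add a line explaining why the negative-$\lambda$ MGF bound holds.
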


\end{document}